\documentclass[11pt]{article}

\usepackage[margin=1in]{geometry}
\usepackage{amsmath,amssymb,amsfonts,amsthm,mathtools,bm}
\usepackage{booktabs}
\usepackage{physics}
\usepackage{enumitem}
\usepackage{xcolor}
\usepackage[colorlinks=true,linkcolor=blue,citecolor=blue,urlcolor=blue]{hyperref}
\usepackage[nameinlink,capitalise]{cleveref}

\usepackage{algorithm}
\usepackage{algpseudocode}
\usepackage{quantikz}

\newtheorem{theorem}{Theorem}[section]
\newtheorem{lemma}[theorem]{Lemma}

\newtheorem{remark}[theorem]{Remark}
\newtheorem{corollary}[theorem]{Corollary}

\crefname{assumption}{Assumption}{Assumptions}
\Crefname{assumption}{Assumption}{Assumptions}
\crefname{corollary}{Corollary}{Corollaries}
\Crefname{corollary}{Corollary}{Corollaries}
\crefname{lemma}{Lemma}{Lemmas}
\Crefname{lemma}{Lemma}{Lemmas}
\crefname{remark}{Remark}{Remarks}
\Crefname{remark}{Remark}{Remarks}

\newcommand{\R}{\mathbb R}
\newcommand{\C}{\mathbb C}

\newcommand{\eps}{\varepsilon}

\newcommand{\polylog}{\operatorname{polylog}}
\newcommand{\bigO}{\mathcal{O}}
\newcommand{\widetildeO}{\widetilde{\mathcal{O}}}
\newcommand{\res}{\mathrm{res}}

\title{A Residual-Based Quantum Linear System Algorithm with Dynamic Stopping
and Applications to Elliptic PDEs}
\author{Xiantao Li \\ Xiantao.Li@psu.edu \\ Department of Mathematics, The Pennsylvania State University }

\begin{document}
\maketitle
\begin{abstract}
Quantum linear-system algorithms (QLSAs) have rigorous worst-case complexity guarantees, but their
runtimes are often chosen from spectral information assumed in advance. What is largely lacking is
an a posteriori progress flag: most QLSA workflows, unlike the classical counterparts,  do not provide a built-in mechanism to signal whether a particular instance has already converged.

For discretizations of elliptic PDEs $-\nabla\cdot(a(x)\nabla u(x))=f(x),$ with divergence--gradient structure
\[
   -\nabla\cdot \big(a(x)\nabla) \approx  A_h=G_h^\dagger G_h,
\]
we formulate a stable first-order ODE whose limiting solution block is the desired Galerkin solution. The
PDE-dependent scale is then \(\norm{G_h}=\bigO(h^{-1})\), comparable to factorized QLSA
constructions with square-root condition-number scaling. We design an augmented dynamics with
residual variables, in which measuring a residual register gives an on-the-fly convergence indicator without
reconstructing the solution vector. For smooth right-hand sides, dynamic stopping can reduce the
evolution time and gate count relative to a fixed worst-case schedule, and may also reduce exposure
to accumulated hardware errors. Numerical experiments for a two-dimensional finite element Poisson problem show that the
residual-register probability follows the actual error decay and, for some right-hand sides,
can stop the quantum circuit well before a conservative worst-case runtime estimate is reached.
\end{abstract}

\section{Introduction}

Quantum linear-system algorithms (QLSAs) have become one of the central primitives in
quantum algorithms. Since the HHL algorithm~\cite{HarrowHassidimLloyd2009}, substantial progress
has been made in improving the dependence on precision, condition number, and access models,
including variable-time amplitude amplification~\cite{Ambainis2012}, exponentially improved
precision dependence~\cite{ChildsKothariSomma2017}, block encodings and quantum singular value
transformation~\cite{GilyenSuLowWiebe2019}, and adiabatic or adiabatic-inspired
linear-system solvers~\cite{SubasiSommaOrsucci2019,AnLin2022,CostaAnSandersSuBabbushBerry2022}.
At the same time, QLSAs are not merely standalone algorithms. They are used as subroutines in
quantum algorithms for differential equations, least-squares problems, and other scientific-computing tasks.
Therefore, the practical behavior of a QLSA can directly affect the performance of a larger
quantum algorithm.

Most QLSA guarantees fix the algorithmic effort in advance, for example, a polynomial
degree, an adiabatic schedule length, or a simulation time, from worst-case spectral information
such as a condition-number bound or a promised singular-value interval.   While this is the right framework for proving black-box complexity theorems, it is
different from the way linear systems are solved in practice:
Classical iterative solvers
typically do not run only for a fixed worst-case iteration count. Instead, they monitor residuals
and stop when the current instance has converged. In this paper, we ask  whether a comparable
residual-based stopping mechanism can be incorporated into a QLSA.

This question is particularly natural for elliptic partial differential equations (PDEs), which are
a standard source of large sparse linear systems in scientific computing. After
discretization, a second-order elliptic operator usually gives a matrix \(A_h\) with condition
number
\begin{equation}\label{A-scale}
    \kappa(A_h)=\Theta(h^{-2}),
\end{equation}
where \(h\) represents the scale of the mesh size. This worst-case scale is determined by the highest-frequency modes
supported by the mesh. A particular right-hand side \(f\), however, may be smooth and thus have
little spectral weight in those modes. Consequently, the practical cost of solving the instance may
depend not only on \(\kappa(A_h)\), but also on the spectral content of the discrete vector
\(\bm b_h\). This type of right-hand-side dependence has begun to appear explicitly in the QLSA
literature:
for example, Li's instance-aware QLSA uses parameters of the augmented system
\(H=[A,-\bm b]\), together with a structure-aware right-rescaling, to obtain bounds that depend
on the alignment of \(\bm b\) with the linear system rather than only on \(\kappa(A)\)
\cite{Li2025BeyondCondition}. Our approach is complementary. We do not replace the worst-case
complexity theory. Rather, we formulate a QLSA as a time evolution and introduce a residual variable into the dynamics so that convergence
can be detected during the computation.

The elliptic problems considered here have a discrete divergence--gradient factorization
\[
    A_h=G_h^\dagger G_h,
\]
that comes from Galerkin projections in general. 
For finite differences and fixed-order finite elements, the first-order factor \(G_h\) is sparse and
has operator norm
\begin{equation}\label{G-scale}
    \norm{G_h}=\Theta(h^{-1})=
    \Theta\!\left(\sqrt{\kappa(A_h)}\right),
\end{equation}
whereas the stiffness matrix itself has scale \(\norm{A_h}=\Theta(h^{-2})\). This factorization can be leveraged for QLSAs. In particular, Orsucci and Dunjko studied
factorized positive-definite systems \(A=LL^\dagger\) and showed how the factor \(L\) can be used
to obtain square-root condition-number scaling~\cite{OrsucciDunjko2021}. In the elliptic setting,
\(L\) naturally emerges from the discrete divergence--gradient factor \(G_h\). 

Our first observation is that the same first-order scaling can be obtained from a stable ordinary
differential equation (ODE) formulation: the solution of the linear system appears as the limiting
solution block of the dynamics and the simulation complexity scales as $\bigO(h^{-1})$. It is worthwhile to mention that this dynamical viewpoint is familiar in classical numerical
analysis. Richardson iteration and related indirect methods for elliptic PDEs can be interpreted
as time discretizations of parabolic PDE dynamics whose steady state is the desired elliptic
solution~\cite[Chapter 8]{Varga2000MatrixIterative}. The distinct novelty here is that our ODE
formulation also carries residual variables. These variables are not part of the solution vector, but
they are essential for deciding whether the computation has converged.
In classical numerical linear algebra \cite{GolubVanLoan2013Matrix,Saad2003Iterative}, the residual
\[
    \bm r_h=\bm b_h-A_h\bm x_h
\]
is a basic diagnostic. It is tied to the algebraic error
\[
    \bm e_h=\bm x_\ast-\bm x_h
\]
through
\[
    A_h\bm e_h=\bm r_h .
\]

 Our
goal is to build an analogue of this mechanism into a quantum-compatible dynamical system. 
We formulate two related results. The first result shows that 
the elliptic problem  can be written as a stable first-order ODE whose quantum simulation scale is
\(\bigO(\sqrt{\kappa})\). The second result shows that the added residual register can be measured to enable
dynamic stopping.

\begin{theorem}[ODE-based QLSA]
\label{thm:intro-first-order-ode}
Let \(A_h=G_h^\dagger G_h\) be the normalized stiffness matrix for the elliptic discretization.
Introduce dynamic variables, 
\begin{equation}\label{var-aux}
    \bm w=
    \begin{bmatrix}
    \bm r\\
    \bm s
    \end{bmatrix},
    \qquad
    M_h=
    \begin{bmatrix}
    0&-G_h^\dagger\\
    G_h&-I
    \end{bmatrix},
    \qquad
    P_r=\begin{bmatrix}I&0\end{bmatrix},
\end{equation}
and consider the accumulator relaxation
\begin{equation}
\label{ode-relax}
    \dot{\bm x}=P_r\bm w,
    \qquad
    \dot{\bm w}=M_h\bm w,
    \qquad
    \bm x(0)=0,
    \qquad
    \bm w(0)=
    \begin{bmatrix}
    \bm b_h\\0
    \end{bmatrix}.
\end{equation}
Then \(\bm w(t)\) decays exponentially and \(\bm x(t)\) converges to the Galerkin solution
\(\bm x_\ast\) satisfying
\[
    G_h^\dagger G_h\bm x_\ast=\bm b_h .
\]
Suppose that \(G_h/\alpha_G\) admits sparse or block-encoding access with
\[
    \alpha_G=\Theta\bigl(\norm{G_h}\bigr),
\]
and that the right-hand-side state \(\ket{\bm b_h}\) can be prepared with cost \(C_b\). Then, for
any target accuracy \(\eps\), a stable quantum linear-ODE primitive with linear dependence on
\(\alpha_G T\) prepares an \(\eps\)-accurate normalized solution state with leading query and gate
complexity
\[
    \widetildeO\!\left(
        C_b+
        \alpha_G
        \polylog\frac{1}{\eps}
    \right),
\]
up to the standard output-normalization factor. Moreover,
after the relaxation time \(T=\widetildeO(\log(1/\eps))\), the \(\bm x\)-block has order-one
weight in the joint state, $(\bm w, \bm x)$, so extracting the solution block has constant overhead.

For elliptic discretizations with
\(
    \lambda_{\min}(A_h)=\Theta(1),
    \,
    \lambda_{\max}(A_h)=\Theta(h^{-2}),
    \,
    \norm{G_h}=\Theta(h^{-1}),
\)
this gives
\[
    \alpha_G=\Theta(h^{-1})
    =
    \Theta\!\left(\sqrt{\kappa(A_h)}\right),
\]
and hence the mesh-dependent complexity scaling is
\[
    \widetildeO\!\left(
        \sqrt{\kappa(A_h)}\,\polylog\frac{1}{\eps}
    \right).
\]
\end{theorem}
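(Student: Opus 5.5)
The plan has two parts: first the continuous-time analysis of the accumulator relaxation, then the complexity accounting through a black-box linear-ODE solver.

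\textbf{Spectral analysis of \(M_h\).} Write \(G_h=U\Sigma V^\dagger\) in singular value decomposition, with singular values \(\sigma_k\) satisfying \(\sigma_k^2=\lambda_k(A_h)\). In the right/left singular basis the dynamics decouple into \(2\times 2\) blocks
\[
\begin{bmatrix}0&-\sigma_k\\ \sigma_k&-1\end{bmatrix},
\]
plus possible extra \(-I\) blocks on \(\ker G_h^\dagger\) in the \(\bm s\) component. Each \(2\times2\) block has eigenvalues \(\mu_\pm=\tfrac12\bigl(-1\pm\sqrt{1-4\sigma_k^2}\bigr)\), so \(\operatorname{Re}\mu_\pm\le -\gamma\) with \(\gamma=\min\{\tfrac12,\tfrac12(1-\sqrt{1-4\sigma_{\min}^2})\}=\Theta(1)\), since \(\lambda_{\min}(A_h)=\Theta(1)\).

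Stability also follows directly from a dissipation identity. The Hermitian part of \(M_h\) is \(\operatorname{diag}(0,-I)\), which gives \(\tfrac{d}{dt}\norm{\bm w}^2=-2\norm{\bm s}^2\le 0\), so \(\norm{e^{tM_h}}\le 1\). For exponential decay with a uniform constant I would use the block structure: each \(2\times2\) block is similar to its diagonal or Jordan form with a condition number bounded only near \(\sigma_k=1/2\). Since only finitely many block shapes occur as \(\sigma_k\) ranges over \([\sigma_{\min},\infty)\), and for large \(\sigma_k\) the block is nearly skew with an eigenvector condition number tending to \(1\), this yields \(\norm{e^{tM_h}}\le C e^{-\gamma' t}\) with \(C,\gamma'\) independent of \(h\).

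\textbf{Limit of \(\bm x\).} Integrate \(\dot{\bm x}=P_r\bm w\) to get \(\bm x(t)=P_r\int_0^t e^{\tau M_h}\,d\tau\,\bm w(0)\). This converges to \(\bm x_\infty=-P_rM_h^{-1}\bm w(0)\), with tail error \(\bigO(e^{-\gamma' t})\). Now solve \(M_h\bm z=-(\bm b_h,0)\) with \(\bm z=(\bm z_r,\bm z_s)\). The second block row gives \(\bm z_s=G_h\bm z_r\), and the first gives \(G_h^\dagger\bm z_s=\bm b_h\), hence \(A_h\bm z_r=\bm b_h\) and \(\bm x_\infty=\bm x_\ast\). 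Invertibility of \(M_h\) follows from the injectivity of \(G_h\) (equivalently \(\lambda_{\min}>0\)).

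\textbf{Complexity.} Write the joint system \((\bm x,\bm w)\) as \(\dot{\bm y}=\mathcal{M}\bm y\) with
\[
\mathcal{M}=\begin{bmatrix}0&P_r\\0&M_h\end{bmatrix}.
\]
A block encoding of \(\mathcal{M}\) with normalization \(\bigO(\alpha_G+1)\) is assembled from that of \(G_h/\alpha_G\) using linear combinations of unitaries. The matrix \(\mathcal{M}\) is not strictly stable, because \(\bm x\) is a conserved accumulator. However, \(\norm{e^{t\mathcal{M}}}\) is bounded uniformly in \(t\), since \(\int_0^\infty\norm{e^{\tau M_h}}\,d\tau\le C/\gamma'\). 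I would therefore invoke a stable linear-ODE primitive (linear combination of Hamiltonian simulation, or a time-marching scheme) whose cost is \(\widetildeO(\alpha T\polylog(1/\eps))\) queries times a norm ratio \(\max_t\norm{\bm y(t)}/\norm{\bm y(T)}\). Taking \(T=\Theta(\log(1/\eps))\) gives a total cost of \(\widetildeO(C_b+\alpha_G\polylog(1/\eps))\).

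\textbf{Main obstacle: constant extraction overhead.} The hardest step is showing that the norm ratio is \(\bigO(1)\) and that the \(\bm x\)-block carries order-one weight at time \(T\). At \(T\), we have \(\norm{\bm w(T)}\lesssim e^{-\gamma' T}\norm{\bm b_h}\le\eps\norm{\bm b_h}\), while \(\norm{\bm x(T)}\approx\norm{\bm x_\ast}\ge\norm{\bm b_h}/\lambda_{\max}(A_h)\). This lower bound looks bad, but the normalization convention rescues it. With the normalized stiffness matrix and \(\lambda_{\min}=\Theta(1)\), low modes carry \(\norm{\bm x_\ast}\) comparable to \(\norm{\bm b_h}\). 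In the worst case one only has \(\norm{\bm x_\ast}\ge\norm{\bm b_h}/\lambda_{\max}\), and that loss is exactly what the statement absorbs into the ``standard output-normalization factor.'' I would therefore state the weight bound as \(\norm{\bm x(T)}^2/(\norm{\bm x(T)}^2+\norm{\bm w(T)}^2)\ge 1-\bigO(\eps^2\norm{\bm b_h}^2/\norm{\bm x_\ast}^2)\). This is order one once \(\eps\) is chosen relative to \(\norm{\bm x_\ast}/\norm{\bm b_h}\), which costs only logarithmically. Finally, substituting \(\alpha_G=\Theta(h^{-1})=\Theta(\sqrt{\kappa(A_h)})\) gives the stated mesh scaling.
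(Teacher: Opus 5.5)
Your proof is correct, but the analytic part takes a different route from the paper's. For exponential decay, the paper (Appendix A) never decouples modes. It builds the hypocoercive functional $\mathcal E=\norm{\bm r}^2+\norm{\bm s}^2-2\eta\operatorname{Re}\langle\bm r,K_h\bm s\rangle$, with $K_h=A_h^{-1}G_h^\dagger$, shows that $\mathcal E$ is equivalent to $\norm{\bm w}^2$ and satisfies $\dot{\mathcal E}\le-c\,\mathcal E$, and obtains constants depending only on the discrete Poincar\'e constant.

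Your SVD reduction is exact here, because the damping is $-I$ on all of $\bm s$. It buys the exact asymptotic rate, which is $\approx\lambda_{\min}(A_h)$ when that is small and is capped at $\tfrac12$. The Lyapunov route has less explicit constants, but it needs no decoupling. It therefore survives non-identity damping or the skew perturbations the paper envisions for advection--diffusion, and it never meets the critical-damping point $\sigma=\tfrac12$.

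That point is where your sketch should be tightened. ``Finitely many block shapes'' does not give $h$-uniform constants, since the eigenvector condition number of $B_\sigma$ blows up as $\sigma\to\tfrac12$. Instead, use the closed form $e^{tB_\sigma}=e^{-t/2}\bigl(\cosh(\delta t)I+\delta^{-1}\sinh(\delta t)(B_\sigma+\tfrac12 I)\bigr)$ with $\delta=\sqrt{1/4-\sigma^2}$, which is analytic through $\delta=0$. Combined with $\norm{B_\sigma+\tfrac12 I}=\sigma+\tfrac12$, it gives $\norm{e^{tB_\sigma}}\le C(1+t)e^{-\gamma t}$ with an absolute $C$ for all $\sigma\ge\sigma_{\min}$.

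For the limit, your identity $\bm x_\infty=-P_rM_h^{-1}\bm w(0)$ replaces the paper's argument. The paper instead reintroduces the flux $\bm q=G_h\bm x-\bm s$ and passes to the limit in $\bm r=\bm b_h-G_h^\dagger\bm q$. Its version has the side benefit of certifying that $(\bm r,\bm s)$ are genuine conservation and flux residuals at every time.

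Your complexity accounting matches Appendix B: $T$ contains $\log(\Gamma_b/\eps)$ with $\Gamma_b=\norm{\bm b_h}/\norm{\bm x_\ast}\le\lambda_{\max}(A_h)$, the output factor is $\Gamma_b$, and $p_x=1-\bigO(\eps^2)$. Your remark that the joint generator is only marginally stable, with $\sup_t\norm{e^{t\mathcal M}}<\infty$, is sharper than the paper's appeal to LCHS. Standard LCHS needs a negative semidefinite Hermitian part, which $\mathcal M$ lacks: the $\bm x$--$\bm r$ coupling contributes eigenvalues $\pm\tfrac12$. So a solver whose cost is governed by $\sup_t\norm{e^{t\mathcal M}}$ is the safer primitive to cite.
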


To run the dynamics in \cref{ode-relax}, suitable implementation primitives include QLSA-based ODE solvers, LCHS,
Schr\"odingerisation, Laplace-transform eigenvalue-transformation methods, and recent algorithms
for stable linear nonunitary dynamics
\cite{BerryChildsOstranderWang2017,Krovi2023,AnLiuLin2023,AnChildsLin2023,
JinLiuYu2024,AnChildsLinYing2024Laplace,LowSomma2025Nonunitary,li2025linear,JenningsLostaglioLowriePallisterSornborger2024}.

\medskip 

The estimate in
\cref{thm:intro-first-order-ode} is comparable to factorized QLSA constructions based on
square-root condition-number scaling ~\cite{OrsucciDunjko2021}. It is still, however, a worst-case statement. It does not
tell us whether a particular right-hand side requires the full worst-case polynomial degree, ODE
time, or simulation effort.
To illustrate this point, we consider the standard Poisson problem
\[
    -\Delta u=f
    \quad\text{in } \Omega=(0,1)^2,
    \qquad
    u=0
    \quad\text{on } \partial\Omega,
\]
discretized by continuous piecewise-linear finite elements on a uniform triangular mesh obtained by
splitting each square cell into two triangles. We test QLSAs for four choices of the right-hand side functions (detailed setup in \cref{sec:numerics}):
\[
\begin{alignedat}{2}
\mathrm{I:}\quad
& f_{\rm I}(x,y)
&&=2\pi^2\sin(\pi x)\sin(\pi y),\\
\mathrm{II:}\quad
& f_{\rm II}(x,y)
&&=2\pi^2\sin(\pi x)\sin(\pi y)
+\frac{5\pi^2}{2}\sin(2\pi x)\sin(\pi y),\\
\mathrm{III:}\quad
& f_{\rm III}(x,y)
&&=2\pi^2\sin(\pi x)\sin(\pi y)
+\frac{13\pi^2}{2}\sin(3\pi x)\sin(2\pi y)
+\frac{41\pi^2}{4}\sin(5\pi x)\sin(4\pi y),\\
\mathrm{IV:}\quad
& f_{\rm IV}(x,y)
&&=\exp\!\left(
-\frac{(x-1/2)^2+(y-1/2)^2}{0.1^2}
\right).
\end{alignedat}
\]
Cases I--III are increasingly oscillatory sine combinations, while Case IV is a localized smooth
bump. The matrix \(A_h\) and the first-order factor \(G_h\) are fixed by the mesh; only the 
right-hand side changes. These are standard manufactured-solution tests for elliptic discretizations
\cite{BrennerScott2008,Ciarlet1978}.

\begin{figure}[t]
\centering
\begin{minipage}{0.48\linewidth}
\centering
\includegraphics[width=\linewidth]{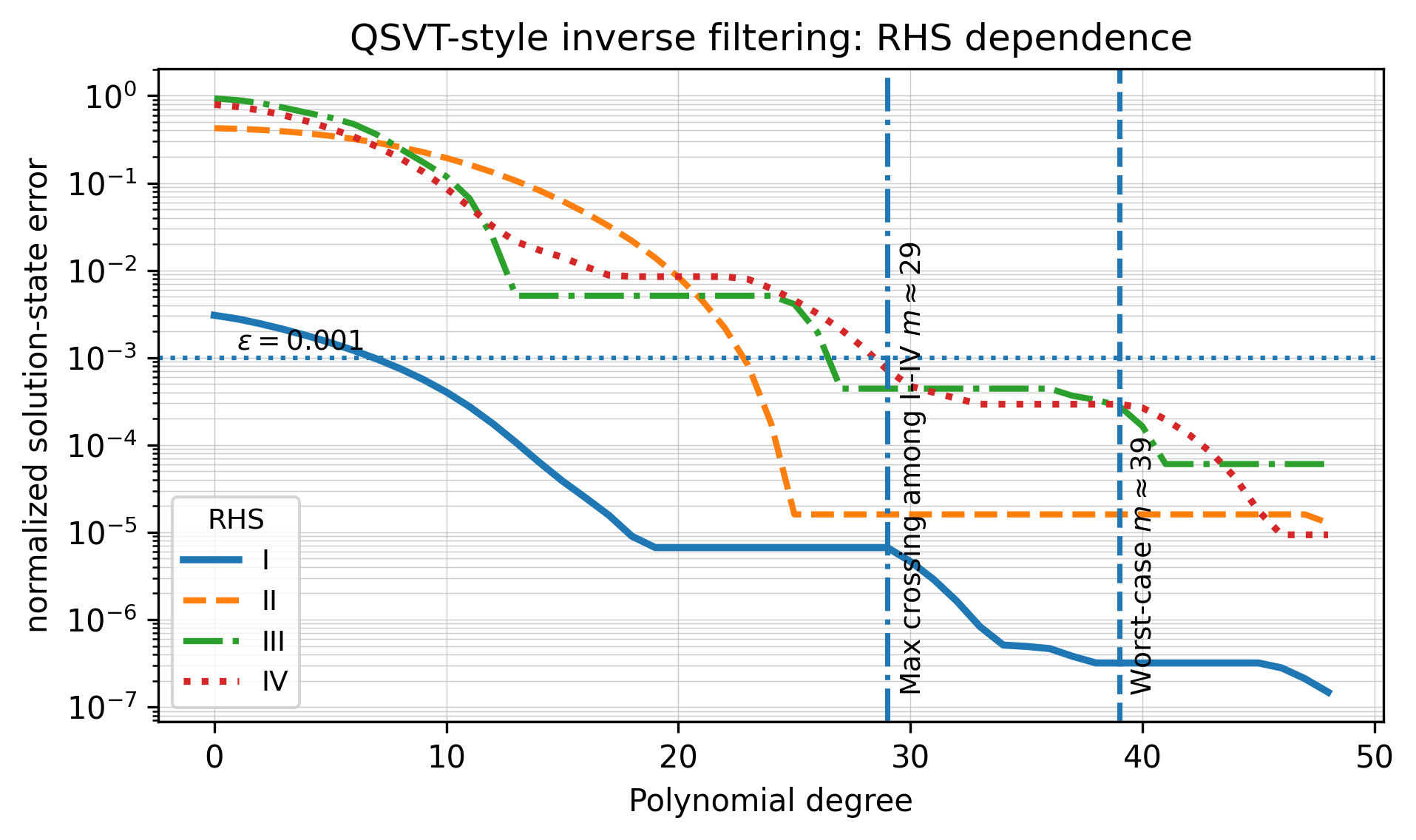}
\end{minipage}
\hfill
\begin{minipage}{0.48\linewidth}
\centering
\includegraphics[width=\linewidth]{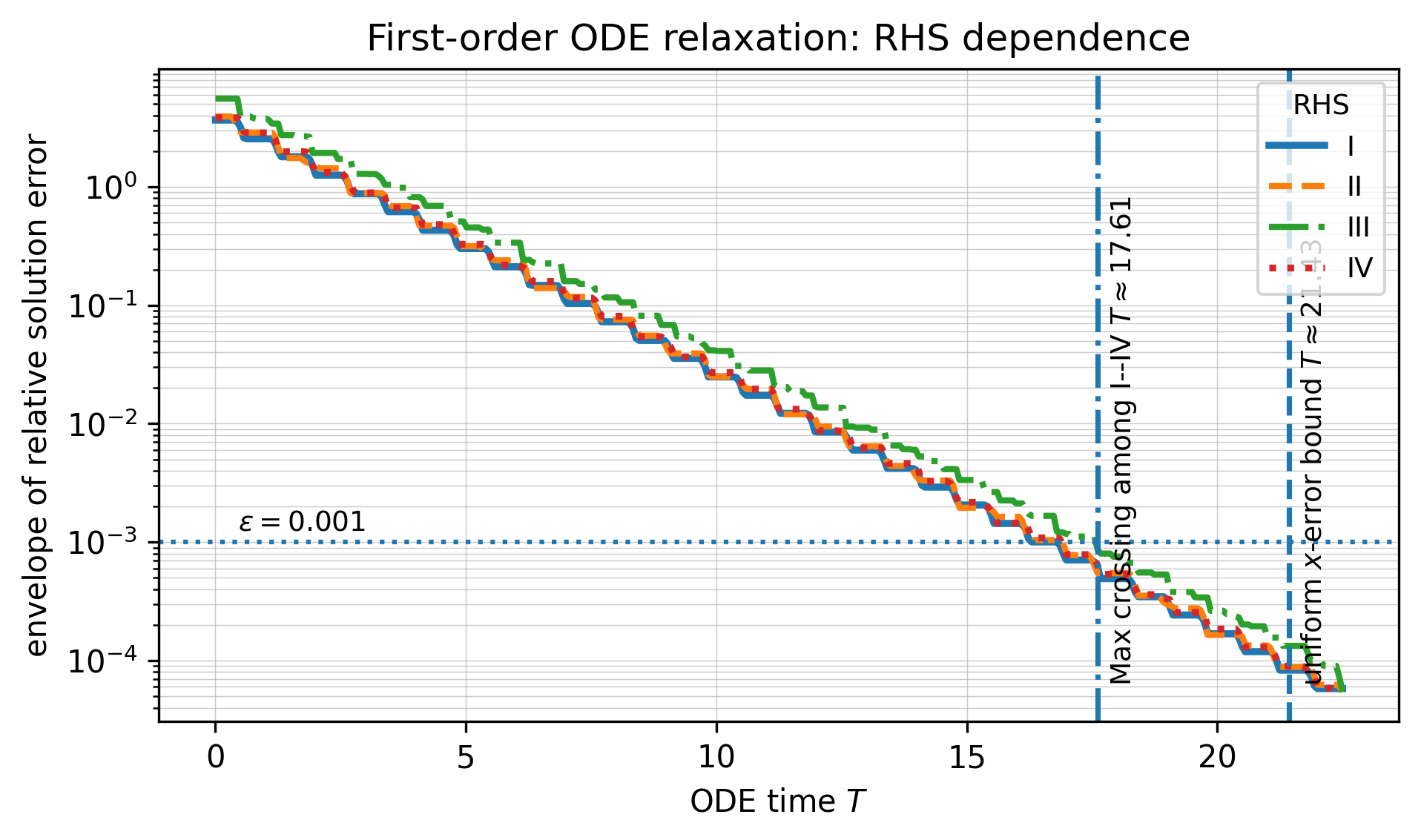}
\end{minipage}
\caption{Right-hand-side dependence for the finite element Poisson problem.
Left: a QSVT-style spectral inverse-polynomial experiment, where the horizontal axis is the
polynomial degree. This is a spectral-polynomial emulator for the inverse filter rather than a
hardware-level QSVT circuit. Right: the first-order ODE relaxation \cref{ode-relax},
where the horizontal axis is the ODE evolution time. The ODE curves use the 
relative solution error. In both panels, the dashed vertical line is the conservative worst-case
prediction, while the dash-dotted line marks the latest empirical crossing among the benchmark
cases I--IV.}
\label{fig:intro-rhs-dependence}
\end{figure}

In \cref{fig:intro-rhs-dependence}, the left panel implements a
QSVT-style inverse-polynomial experiment. Here QSVT refers to quantum singular
value transformation~\cite{GilyenSuLowWiebe2019}: given a block encoding of a
matrix, QSVT implements a polynomial transformation with a spectral inverse-polynomial
filter. We plot the normalized
solution-state error as a function of the polynomial degree. The right panel
shows the first-order ODE relaxation \cref{ode-relax} starting with zero initial condition. One observes that the practical degree or evolution time can depend
strongly on the right-hand side. Some input functions lead to solutions that cross the target tolerance  $\varepsilon=10^{-3}$ well before the conservative
worst-case line. Therefore, a fixed runtime can be overly conservative: the algorithm is guaranteed to succeed, but the circuit might be deep and the particular instance may have converged much earlier.

The ODE formulation in \cref{ode-relax} also has another
advantage that is not native to the standard one-shot QSVT inverse-filter workflow. It can naturally
incorporate an initial guess. A warm start can also be incorporated by choosing an initial \(\bm x(0)\) and an auxiliary
\(\bm q(0)\), then setting
\[
    \bm r(0)=\bm b_h-G_h^\dagger\bm q(0),
    \qquad
    \bm s(0)=G_h\bm x(0)-\bm q(0).
\]
If these residuals are small or concentrated in rapidly damped modes, the observed relaxation time
can be reduced.
The ODE formulation also makes it natural to organize residual readout through checkpoint circuits. Rather than estimating the
solution error directly, we monitor residual information carried by the dynamics. The next result
states the corresponding residual-augmented version of the first-order solver.

\begin{theorem}[ODE-based QLSA with dynamic stopping (Informal)]
\label{thm:intro-residual-based}
Under the hypotheses of \cref{thm:intro-first-order-ode},  and assume that the joint dynamics carrying the solution block
\(\bm x(t)\) and a mixed residual block  is implemented by a
quantum ODE solver.

Let \(t_{\rm ent}(\bm b_h)\) be the first checkpoint time at which a constant-threshold residual
test accepts, meaning that the measured residual probability indicates entry into the exponentially
decaying residual regime. Let
\[
    T_\star(\bm b_h,\eps)=t_{\rm ent}(\bm b_h)+\Delta t(\eps)
\]
be the final production time, where \(\Delta t(\eps)=\bigO(\log(1/\eps))\) is the deterministic
coasting time chosen from the stability estimate. Then a production run at time \(T_\star\)
prepares an \(\eps\)-accurate normalized solution state with leading cost
\[
    \widetildeO\!\left(
        C_b+
        \alpha_G T_\star(\bm b_h,\eps)\,
        \polylog\frac{1}{\eps}
    \right),
\]
up to the standard output-normalization factors of the chosen nonunitary ODE primitive. 
\end{theorem}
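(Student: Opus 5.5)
The plan is to reduce \cref{thm:intro-residual-based} to the worst-case statement \cref{thm:intro-first-order-ode}. The reduction has two parts: a deterministic a posteriori stability estimate for the residual dynamics, and a cost accounting in which the time horizon $T$ is replaced by the data-dependent time $T_\star(\bm b_h,\eps)$.

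\textbf{Dissipation identity for $M_h$.} The symmetric part of $M_h$ is $\operatorname{diag}(0,-I)$, so
\[
    \frac{d}{dt}\norm{\bm w}^2=-2\norm{\bm s}^2\le 0 ,
\]
and the $\bm w$-dynamics are contractive.

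\textbf{Spectral decomposition.} Decompose $\bm w$ along the singular vectors of $G_h$. Each singular value $\sigma_k$ produces a $2\times 2$ block
\[
    \begin{bmatrix}0&-\sigma_k\\ \sigma_k&-1\end{bmatrix},
\]
whose eigenvalues have real part at most $-\gamma_k<0$, where $\gamma_k=\min\{1/2,\;\Theta(\sigma_k^2)\}$. Since $\lambda_{\min}(A_h)=\Theta(1)$, this gives a uniform gap $\gamma=\Theta(1)$. Hence there is a constant $C$, independent of $h$ because the blocks are uniformly diagonalizable away from the critical damping point, such that
\[
    \norm{\bm w(t+\tau)}\le C e^{-\gamma\tau}\norm{\bm w(t)} .
\]
If critical damping occurs, $C$ picks up only a factor polynomial in $\tau$, which gets absorbed into the logarithm. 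This is the key estimate: it is \emph{restartable} from any time $t$, not only from $t=0$.

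\textbf{From residual size to solution error.} Integrating $\dot{\bm x}=P_r\bm w$ and using $\bm x(\infty)=\bm x_\ast$ gives
\[
    \bm x_\ast-\bm x(t)=\int_t^\infty P_r\bm w,
\]
and therefore
\[
    \norm{\bm x_\ast-\bm x(t)}\le (C/\gamma)\norm{\bm w(t)} .
\]

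\textbf{What the test certifies.} The residual test at checkpoint $t_{\rm ent}$ estimates the residual-register probability $p(t)=\norm{\bm w(t)}^2/(\norm{\bm w(t)}^2+\norm{\bm x(t)}^2)$. It does so with $\bigO(1)$ repetitions of the checkpoint circuit, since the threshold is constant, and the estimate is correct with high probability by a Chernoff bound. Acceptance certifies $\norm{\bm w(t_{\rm ent})}\le c\,\norm{\bm x(t_{\rm ent})}$ for a constant $c$. Combined with the bound $\norm{\bm x_\ast-\bm x}\le (C/\gamma)\norm{\bm w}$, this gives $\norm{\bm x(t_{\rm ent})}=\Theta(\norm{\bm x_\ast})$ once $c$ is small enough.

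\textbf{Coasting.} Apply the restartable contraction over an extra time $\Delta t=\gamma^{-1}\log(C'/\eps)$. This yields
\[
    \norm{\bm w(T_\star)}\le \eps\,\norm{\bm x_\ast}
    \quad\text{and}\quad
    \norm{\bm x_\ast-\bm x(T_\star)}\le \eps\norm{\bm x_\ast}.
\]
The normalized $\bm x$-block is then $\bigO(\eps)$-close to $\ket{\bm x_\ast}$, and it carries order-one weight in the joint state, exactly as in the last claim of \cref{thm:intro-first-order-ode}.

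\textbf{Cost accounting.} Invoke the stable linear-ODE primitive assumed in \cref{thm:intro-first-order-ode}, whose cost is linear in $\alpha_G T$ times $\polylog(1/\eps)$, with horizon $T=T_\star$ for the production run. This gives the claimed $\widetildeO(C_b+\alpha_G T_\star\polylog(1/\eps))$. The checkpoint runs must also be paid for. I would schedule checkpoints geometrically, $t_j=2^j t_0$, so their total cost is a geometric sum bounded by $\bigO(\alpha_G t_{\rm ent})$ times the constant number of shots. This is dominated by the production cost. Because $t_{\rm ent}$ overshoots the true entry time by at most a factor of two, the $\widetildeO$ absorbs that factor as well.

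The main obstacle is the second step: obtaining an $h$-uniform constant $C$ in the non-normal contraction bound. In particular, one must ensure that a small residual \emph{at an arbitrary time} controls all future residuals, even though the test measures only the $\bm w$-norm and not its modal distribution. I would first try an energy argument with a modified Lyapunov functional $\norm{\bm w}^2+\delta\,\mathrm{Re}\langle \bm r,(G_hG_h^\dagger)^{-1/2}\cdots\rangle$. If that fails, I would fall back on the blockwise $2\times2$ analysis, bounding eigenvector condition numbers uniformly in $\sigma_k\ge\Theta(1)$.
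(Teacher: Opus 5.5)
Your overall reduction is the paper's. The ingredients are the tail bound \(\norm{\bm x_\ast-\bm x(t)}\le (C_{\rm st}/c_{\rm st})\norm{\bm w(t)}\), a constant-threshold acceptance that makes \(\norm{\bm x(t_{\rm ent})}\) comparable to \(\norm{\bm x_\ast}\) (the paper's condition \(\beta_0=C_{\rm tail}\eta_0<1\)), deterministic coasting for \(\Delta t=\bigO(\log(1/\eps))\), and the cost of \cref{thm:quantum-first-order-ode} with horizon \(T_\star\). These are the content of \cref{lem:residual-probability-tail}, \cref{thm:constant-threshold-stopping} and Appendix~B.

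You genuinely differ on the key lemma, the \(h\)-uniform bound \(\norm{e^{tM_h}}\le C_{\rm st}e^{-c_{\rm st}t}\). The paper uses the Lyapunov functional \(\norm{\bm r}^2+\norm{\bm s}^2-2\eta\operatorname{Re}\langle\bm r,K_h\bm s\rangle\) with \(K_h=A_h^{-1}G_h^\dagger\). That argument needs only \(\sigma_{\min}(G_h)\ge\gamma_0\) and \(K_hG_h=I\). You instead split the \(\bm w\)-space orthogonally along the singular vectors of \(G_h\) into invariant \(2\times2\) blocks \(B_\sigma\).

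Your route is valid and gives more: the exact decay rate and explicit constants. For instance, \(c_{\rm st}=\tfrac12\) and \(C_{\rm st}\le 1+\sqrt{(\sigma_0+\frac12)/(\sigma_0-\frac12)}\) whenever \(\sigma_{\min}(G_h)\ge\sigma_0>\tfrac12\). This matters here because \(C_{\rm tail}=C_{\rm st}/c_{\rm st}\) fixes both the admissible threshold \(p_0\) and the coasting time. The energy estimate gives valid but non-sharp constants tied to the auxiliary parameter \(\eta\). In exchange, the paper's route has no damping case distinctions and does not rely on the exact \(2\times2\) decoupling, so it adapts more easily to non-scalar damping or perturbed generators. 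If you do try the energy argument, note that \(G_hG_h^\dagger\) is in general singular, since there are more gradient than nodal unknowns. The cross term that works pairs \(\bm r\) with the pseudo-inverse image \(A_h^{-1}G_h^\dagger\bm s\).

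A few points to tighten:

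\emph{Missing block.} Your block list omits the invariant complement \(\{0\}\oplus\ker G_h^\dagger\), on which \(M_h\) acts as \(-I\). It is harmless (it is the \((I-P_h)\bm s\) term in the paper's computation), but the decomposition is incomplete without it.

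\emph{Near-critical damping.} Uniform diagonalizability fails not only at \(\sigma=\tfrac12\) but also near it, where the eigenvector condition number blows up. Use instead \(e^{tB_\sigma}=e^{-t/2}\bigl[\cosh(\delta t)I+\delta^{-1}\sinh(\delta t)(B_\sigma+\tfrac12 I)\bigr]\) with \(\delta^2=\tfrac14-\sigma^2\). This gives \(\norm{e^{tB_\sigma}}\le C(1+t)e^{-\gamma t}\) uniformly in \(\sigma\ge\sigma_{\min}(G_h)\).

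\emph{Sampling.} Certifying \(p_{\res}(t_k)\le p_0\) at every tested checkpoint needs a confidence buffer and a union bound. So \(N_{\rm shot}=\bigO(\log(K/\nu))\) rather than \(\bigO(1)\), and the conclusion holds with probability \(1-\nu\).

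\emph{Overshoot remark.} Your ``factor of two'' claim ignores that \(p_{\res}(t)\) can oscillate. The statement does not need it, since \(t_{\rm ent}\) is defined as the first accepted checkpoint.

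Your geometric-sum bound on the checkpoint stage is a useful addition; the paper only states it qualitatively.
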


We will show that the checkpoint test uses a constant residual-probability threshold, independent of \(\eps\). Thus
the stopping test does not require resolving an \(\bigO(\eps^2)\)-probability event by direct
sampling.  The corresponding threshold and sampling model are elaborated in
\cref{subsec:constant-threshold-coasting,subsec:checkpoint-sampling,subsec:algorithmic-summary}.

\smallskip 
Overall, the residual-augmented accumulator ODE keeps the solution of the linear system as its limiting
\(\bm x\)-block, while adding a measurable indicator of progress. Thus the purpose of the residual
dynamics is not to improve the optimal worst-case black-box QLSA complexity. Rather, it is to make
the solver more adaptive: run the dynamics, monitor a residual register, detect when the current
instance has entered a small-residual regime, and then use the stability estimate to finish the solve
with a prescribed accuracy.

\paragraph{Related work.}
Quantum linear-system algorithms have developed along several complementary directions since
HHL~\cite{HarrowHassidimLloyd2009}. Improvements based on variable-time amplitude amplification,
precision-efficient linear-system solvers, block encodings, and QSVT have led to nearly optimal
black-box QLSA complexity under standard oracle assumptions
\cite{Ambainis2012,ChildsKothariSomma2017,GilyenSuLowWiebe2019}. A separate line of work develops
adiabatic and adiabatic-inspired linear-system solvers, in which the solution state is prepared by
following, randomizing, or discretizing a Hamiltonian path associated with the linear system
\cite{SubasiSommaOrsucci2019,AnLin2022,CostaAnSandersSuBabbushBerry2022,JenningsLostaglioPallisterSornborgerSubasi2025}. These algorithms provide
a conceptually different route to QLSA with near-optimal or optimal dependence on the condition
number and precision, but their stopping times are still prescribed from a priori spectral
information.

Quantum algorithms for elliptic PDEs have also been studied through finite difference, finite
element, and spectral discretizations
\cite{CaoPapageorgiouPetrasTraubKais2013,CladerJacobsSprouse2013,
MontanaroPallister2016,ChildsLiuOstrander2021}. More recent work has emphasized explicit block encodings, boundary
conditions, and quantum-compatible preconditioning
\cite{DeimlPeterseim2024,KharaziFitzpatrickKirbyMullin2025}. Preconditioning is particularly important for PDE linear systems, and multilevel methods such as
the BPX preconditioner provide a natural bridge between classical elliptic theory and quantum
linear algebra~\cite{BramblePasciakXu1990,DeimlPeterseim2024}. Tong, An, Wiebe, and Lin
developed fast inversion as a preconditioning primitive and discussed QLSP costs in terms of
right-hand-side overlap and effective condition numbers~\cite{TongAnWiebeLin2021}. Schr\"odingerization-based approaches have also
been used to connect classical iterative linear-algebra methods and preconditioned linear systems
to quantum simulation frameworks
\cite{JinLiu2024Discrete,JinLiuMaYu2025}.  Recent quantum multigrid work for finite element problems also seeks to bring multilevel
iterative-solver structure into the quantum setting, particularly to address condition-number
growth and the use of initial guesses~\cite{RaisuddinDe2024QuantumMultigrid}. This is complementary
to the present work: qMG introduces a quantum analogue of multigrid iteration, whereas our focus is
on a co-evolving residual register and a residual-based stopping test. These works are closely related in spirit because they
bring numerical linear-algebra structure into quantum algorithms. The present paper differs in that
the residual is embedded into the quantum
dynamics as a measurable register.

Another relevant direction is instance-aware QLSA. Standard complexity estimates are often stated
in terms of matrix-level parameters such as sparsity, normalization, and condition number. In
practice, however, the difficulty of a linear-system instance can depend strongly on the input
state or right-hand side. Li's right-hand-side dependent QLSA~\cite{Li2025BeyondCondition} is a
recent example of this perspective.

\paragraph{Organization.}
The remainder of the paper develops this idea in the elliptic setting. We first review the
Galerkin discretization and the divergence--gradient factorization of the stiffness matrix in \cref{sec:elliptic-galerkin}. We then
explain sparse and block-encoding access to the first-order factor \(G_h\). Next in \cref{sec:ode-solver} we formulate the
first-order ODE solver and derive its stability and complexity estimates.  
We then augment these ODEs with residual variables, dilate the resulting nonunitary dynamics to a
Schr\"odinger equation, and show how measuring a residual register yields a residual-based dynamic
stopping rule in \cref{sec:weak-measurement}.
 Finally, we
present numerical experiments in \cref{sec:numerics} and discuss extensions in \cref{sec:discussion}.

\section{Elliptic discretizations and first-order access}
\label{sec:elliptic-galerkin}

In this section we review the finite element linear systems that will be used throughout the
paper. All objects in this section are classical matrices and vectors. We reserve bra-ket notation,
such as \(\ket{\bm b_h}\) and \(\ket{\bm x_h}\), for the next section, where these vectors are
encoded as normalized quantum states.

Let \(\Omega\subset\R^d\) be a bounded connected Lipschitz domain and consider the elliptic model
problem
\begin{equation}
\label{eq:pde}
-\nabla\cdot(a(x)\nabla u(x))=f(x),
\qquad x\in\Omega,
\qquad
u|_{\partial\Omega}=0 .
\end{equation}
We assume that \(a(x)\in\R^{d\times d}\) is symmetric and uniformly elliptic:
\[
a_{\min}|\bm\psi|^2
\le
\bm\psi^T a(x)\bm\psi
\le
a_{\max}|\bm\psi|^2
\qquad\text{for all }\bm\psi\in\R^d .
\]
The weak formulation is: find \(u(x)\in H_0^1(\Omega)\) such that
\begin{equation}
\label{eq:weak-form}
\mathsf a(u,v)=\mathsf f(v)
\qquad\text{for all } v\in H_0^1(\Omega),
\end{equation}
where from integration by parts, the bilinear form and linear functional are respectively given by,
\[
\mathsf a(u,v)=\int_\Omega \nabla v(x)^T a(x)\nabla u(x)\dd x,
\qquad
\mathsf f(v)=\int_\Omega f(x)v(x)\dd x .
\]
This variational form is the starting point of the Galerkin finite element method \cite{Ciarlet1978,BrennerScott2008}. Instead of
enforcing the PDE pointwise, one asks that the equation hold after testing against all functions in
the energy space.

\subsection{Galerkin discretization}
\label{subsec:galerkin-discretization}

Let \(\mathcal T_h\) be a shape-regular, quasi-uniform triangulation with mesh width \(h\), and let
\(V_h\subset H_0^1(\Omega)\) be a conforming finite element space with basis
\(\{\phi_j\}_{j=1}^{N_h}\). The Galerkin approximation seeks a solution in the subspace spanned by these basis functions, 
\[
u_h(x)=\sum_{j=1}^{N_h} (\bm x_h)_j\phi_j(x)\in V_h
\]
and is determined by forcing the weak form \eqref{eq:weak-form} in the subspace,
\begin{equation}
\label{eq:galerkin}
\mathsf a(u_h,v_h)=\mathsf f(v_h)
\qquad\text{for all }v_h\in V_h .
\end{equation}
Taking \(v_h=\phi_k\) gives the finite-dimensional linear system
\begin{equation}
\label{eq:unscaled-fe-system}
A_h\bm x_h=\bm b_h,
\end{equation}
where
\begin{equation}\label{eq:A_hb_h}
    (A_h)_{kj}
=
\int_\Omega \nabla\phi_k(x)^Ta(x)\nabla\phi_j(x)\dd x,
\qquad
(\bm b_h)_k
=
\int_\Omega f(x)\phi_k(x)\dd x .
\end{equation}
Thus the entries of the matrix and the right-hand side are obtained from local element integrals. Throughout the paper, we will use the grid size $h$ to label the resulting approximations. 

The Galerkin projection leads to a linear system of equations, 
\begin{equation}
\label{eq:Ah-system}
    A_h\bm x_h=\bm b_h,
    \qquad
    A_h=A_h^\dagger\succ0 .
\end{equation}

In the quantum-algorithm analysis below, \(A_h\), \(G_h\), and \(\bm b_h\) are understood in the
discrete \(L^2\)-normalized coordinate system, so that Euclidean vector norms represent the
corresponding discrete \(L^2\) norms. Equivalently, the usual nodal stiffness and load vectors are
transformed by the finite element mass matrix. We keep the notation \(A_h\bm x_h=\bm b_h\) for
the normalized system. In these coordinates,
\[
    A_h=G_h^\dagger G_h,\qquad
    \lambda_{\min}(A_h)=\Theta(1),\qquad
    \lambda_{\max}(A_h)=\Theta(h^{-2}).
\]
For second-order elliptic operators on quasi-uniform meshes, this gives
\(\kappa(A_h)=\Theta(h^{-2})\).
Under the standard
elliptic regularity assumption 
\(u\in H^2(\Omega)\), 
\begin{equation}
\label{eq:L2-error}
\norm{u-u_h}_{L^2(\Omega)}
\le
C h^2\norm{u}_{H^2(\Omega)} .
\end{equation} 
\(f\in L^2(\Omega)\)
is a natural baseline assumption for obtaining the \(\bigO(h^2)\) \(L^2\)-rate ~\cite[Sec. 5.7]{BrennerScott2008}.

\subsection{Divergence--gradient factorization}
\label{subsec:div-grad}

One key structural feature is that the stiffness matrix in \eqref{eq:A_hb_h} admits a Gram factorization
\begin{equation}
\label{eq:Gram}
A_h=G_h^\dagger G_h .
\end{equation}
For finite elements, this factorization can be seen directly from the element quadrature. Suppose
the element integrals are evaluated using quadrature nodes \(x_{Kq}\) and weights \(w_{Kq}\), where
\(K\) indexes elements and \(q\) indexes quadrature points. Then \(G_h\) can be assembled from the gradient of the shape functions evaluated at \(x_{Kq}\) and scaled by  \(w_{Kq}\). In particular \(G_h\) can be regarded as the weighted discrete gradient, and \(G_h^\dagger\) is the
corresponding discrete negative divergence.
For conforming Dirichlet discretizations on quasi-uniform meshes,  one has 
\begin{equation}
\label{eq:Gh-scaling}
\norm{G_h}=\Theta(h^{-1}),
\qquad
\sigma_{\min}(G_h)=\Theta(1).
\end{equation}
We will consider this regime in this paper and highlight the scaling with the grid size $h$. 

 More generally, mimetic finite difference and compatible
discretization methods are designed to preserve discrete gradient--divergence adjointness and
integration-by-parts identities as well
\cite{BrezziLipnikovShashkov2005,HymanShashkovSteinberg1997}, and they can also be considered. For highly structured finite difference or tensor-product discretizations, one may also exploit
arithmetic structure directly in the block encoding, rather than relying only on generic sparse
oracles; see, for example, explicit circuit constructions for structured sparse matrices and
structured data-loading approaches
\cite{CampsLinVanBeeumenYang2024,SunderhaufCampbellCamps2024}.

\subsection{Sparse access and block encodings of \(G_h\)}
\label{subsec:block-encoding-Gh}

We use the standard sparse-access model for matrices. The following lemma is a standard consequence
of sparse block-encoding constructions and matrix arithmetic in the block-encoding/QSVT framework
of Gily\'en et al.~\cite{GilyenSuLowWiebe2019}.

\begin{lemma}[Sparse-access block encoding]
\label{lem:sparse-block-encoding}
Let \(M\in\C^{N_1\times N_2}\) be \(s\)-sparse by rows and columns. Assume access to oracles that,
given a row or column index and an integer \(\ell\in\{1,\ldots,s\}\), return the location and value
of the \(\ell\)-th nonzero entry to precision \(\eps_{\rm val}\). If
\(|M_{ij}|\le M_{\max}\), then one can construct an \((\alpha,a,\eps)\)-block encoding of \(M\)
with
\[
\alpha=\bigO(sM_{\max}),
\qquad
a=\bigO(\log(N_1+N_2)),
\]
using \(\bigO(1)\) calls to the sparse-access oracles and additional gates
\(\polylog((N_1+N_2)/\eps)\).
\end{lemma}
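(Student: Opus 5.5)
The plan is to follow the standard Gily\'en--Su--Low--Wiebe sparse-access construction and first reduce to the square Hermitian-free case. Pad \(M\) with zeros to a square matrix of dimension \(N=2^n\ge\max(N_1,N_2)\). This preserves \(s\)-sparsity and the entry bound, and uses \(\bigO(\log(N_1+N_2))\) qubits. Every block encoding of the padded matrix restricts to a block encoding of \(M\), by projecting onto the first \(N_1\) rows and \(N_2\) columns. We therefore assume \(M\) is \(N\times N\).

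The core step builds two state-preparation unitaries \(U_R\) and \(U_C\), acting on \(n+n+1\) (plus \(\bigO(\log s)\) workspace) qubits:
\[
U_R\ket{0}\ket{0}\ket{i}=\ket{\psi_i},\qquad
U_C\ket{0}\ket{0}\ket{j}=\ket{\phi_j},
\]
with
\[
\ket{\psi_i}=\frac{1}{\sqrt s}\sum_{\ell=1}^{s}\ket{i}\left(\sqrt{\tfrac{M_{i,r(i,\ell)}}{M_{\max}}}\ket{0}+\sqrt{1-\tfrac{|M_{i,r(i,\ell)}|}{M_{\max}}}\ket{1}\right)\ket{r(i,\ell)},
\]
\[
\ket{\phi_j}=\frac{1}{\sqrt s}\sum_{\ell=1}^{s}\ket{c(j,\ell)}\ket{0}\ket{j}.
\]
Here \(r(i,\ell)\) and \(c(j,\ell)\) are the column and row locations returned by the oracles, and the square root of a complex entry uses a fixed branch, with \(\sqrt{z}\,\overline{\sqrt{\bar z}}\) arranged to equal \(z\). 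To implement these, apply a uniform superposition over \(\ell\) (Hadamards or an exact \(s\)-amplitude preparation, \(\polylog s\) gates), then call the location oracle once, then call the value oracle once into a register. Next, a controlled rotation computes the amplitude \(\sqrt{M_{ij}/M_{\max}}\) to precision \(\eps\) using \(\polylog(1/\eps)\) arithmetic gates. Finally, uncompute the value register with a second value-oracle call, leaving \(\bigO(1)\) oracle calls in total. A direct computation of the inner products then gives
\[
\bra{\phi_j}\ket{\psi_i}\cdot(\text{suitable ordering})=\frac{M_{ij}}{sM_{\max}},
\]
since the only overlapping term is the one with \(r(i,\ell)=j\) and \(c(j,\ell')=i\), which contributes \(\tfrac1s\cdot\tfrac{M_{ij}}{M_{\max}}\). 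Hence \(U=U_C^\dagger U_R\), after a swap of the index registers, is an \((sM_{\max},\,n+\bigO(\log s)+1,\,\cdot)\)-block encoding of \(M\).

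The remaining work is error bookkeeping. There are two error sources: the oracle value precision \(\eps_{\rm val}\), and the finite-precision arithmetic in the controlled rotation. For the total to stay below \(\eps\), each entrywise error must be at most \(\eps/(sM_{\max})\) after normalization. The error of a block encoding is measured in operator norm, and an \(s\)-sparse perturbation \(E\) satisfies \(\norm{E}\le s\max|E_{ij}|\). Together with the \(1/\sqrt s\) normalizations in both states, this gives \(\norm{M-sM_{\max}\,(\bra{0}U\ket{0})}\le\eps\) as long as each rotation angle is accurate to \(\bigO(\eps/(sM_{\max}))\). Carrying out the rotation to that accuracy costs \(\polylog(sM_{\max}/\eps)\) gates, which is absorbed into \(\polylog((N_1+N_2)/\eps)\) under the usual assumption that \(s\) and \(M_{\max}\) are polynomially bounded. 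I expect this step to be the main obstacle, because it requires setting up the precision model for \(\eps_{\rm val}\) carefully. In practice I would simply invoke Lemma 48 of Gily\'en et al.~\cite{GilyenSuLowWiebe2019} for the constant-query and gate counts rather than redoing the arithmetic-circuit analysis.
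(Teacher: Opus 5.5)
Your strategy matches the paper's. The paper gives no argument of its own; it attributes the lemma to the sparse-access construction of Gily\'en et al.\ (their Lemma~48), which is what you reconstruct and finally propose to invoke. Your explicit reconstruction, however, fails at its central step. The flag of $\ket{\phi_j}$ is $\ket{0}$, so the overlap only picks up the $\ket{0}$-amplitude of the flag in $\ket{\psi_i}$, which is $\sqrt{M_{ij}/M_{\max}}$. Hence
\[
\langle\phi_j|\psi_i\rangle=\frac{1}{s}\sqrt{\frac{M_{ij}}{M_{\max}}}\neq\frac{M_{ij}}{sM_{\max}},
\]
and $U_C^\dagger U_R$ encodes the entrywise square root of $M/M_{\max}$, which is not even linear in $M$. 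You have mixed the one-sided construction (full entry, one rotation) with the two-sided square-root splitting of Childs' quantum-walk construction. Your branch condition $\sqrt{z}\,\overline{\sqrt{\bar z}}=z$ is never used, because the second square-root factor is missing from $\ket{\phi_j}$. There is also an orientation error. Since $\bra{0}\bra{j}U_C^\dagger U_R\ket{0}\ket{i}=\langle\phi_j|\psi_i\rangle$ is the $(j,i)$ entry of the encoded block, even with the right amplitude you would obtain $M^{T}$. No swap of index registers transposes a block.

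Both problems are fixed by the standard construction, in which the input index is spread over its column and the output index over its row:
\begin{itemize}
\item Keep $\ket{\psi_i}=\frac{1}{\sqrt s}\sum_\ell\ket{i}\ket{0}\ket{r(i,\ell)}$ with no rotation.
\item Attach to each term of $\ket{\phi_j}$ the flag $\frac{M_{c(j,\ell),j}}{M_{\max}}\ket{0}+\sqrt{1-|M_{c(j,\ell),j}|^2/M_{\max}^2}\,\ket{1}$, which is legitimate since $|M_{ij}|\le M_{\max}$.
\item Set $U=U_R^\dagger U_C$.
\end{itemize}
Then $\bra{0}\bra{i}U\ket{0}\ket{j}=\langle\psi_i|\phi_j\rangle=M_{ij}/(sM_{\max})$. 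This uses one call to each location oracle and two value-oracle calls, and your operator-norm bookkeeping via $\norm{E}\le s\max_{ij}|E_{ij}|$ goes through unchanged.

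Two smaller points. First, the in-place location oracle $\ket{i}\ket{\ell}\mapsto\ket{i}\ket{r(i,\ell)}$ is unitary only if the returned indices are distinct. Rows and columns with fewer than $s$ nonzeros therefore need a padding convention; Gily\'en et al.\ return out-of-range dummy indices using one extra qubit. Second, the precision issue you flag is settled by requiring $\eps_{\rm val}\le\eps/(2s)$. Value errors enter the block entrywise and are amplified by at most $s$ in operator norm. The lemma as stated leaves this coupling between $\eps_{\rm val}$ and $\eps$ implicit.
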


For finite differences and fixed-order finite elements, every row and column of \(G_h\) has only
\(\bigO(1)\) nonzeros (or \(\bigO(d)\) for high-dimensional problems). After mass scaling, the nonzero entries of the discrete gradient are
\(\bigO(h^{-1})\), and \eqref{eq:Gh-scaling} shows that this is also the correct operator-norm
scale. Thus \(G_h\) admits a sparse-access block encoding with normalization
\[
\alpha_G=\bigO(h^{-1}).
\]

The use of \(G_h\) is also consistent with the general Gram-matrix constructions in the
block-encoding and QSVT framework: Given a block encoding of \(G_h/\alpha_G\), one can obtain a
block encoding of
\[
G_h^\dagger G_h/\alpha_G^2
\]
by standard block-encoding composition~\cite{GilyenSuLowWiebe2019}. Thus the abstract construction
of a Gram matrix from an encoded factor is already part of the quantum matrix-arithmetic toolkit.
On the other hand, the ODE method developed below does not require a direct block encoding of the stiffness matrix
\(A_h=G_h^\dagger G_h\).

\begin{corollary}[Block encoding of the first-order gradient operator]
\label{cor:Gh-block-encoding}
Assume fixed spatial dimension, a shape-regular uniform mesh, fixed polynomial degree in the shape function, and
bounded uniformly elliptic coefficients. Then \(G_h\) and \(G_h^\dagger\) admit sparse-access block
encodings with normalization
\[
\alpha_G=\bigO(h^{-1}).
\]
Consequently, the Hermitian first-order matrix
\begin{equation}
\label{eq:first-order-H-G}
H_G=
\begin{bmatrix}
0&-iG_h^\dagger\\
iG_h&0
\end{bmatrix}
=H_G^\dagger
\end{equation}
admits a block encoding with normalization \(\alpha_H=\bigO(h^{-1})\), using \(\bigO(1)\) calls to
block encodings of \(G_h\) and \(G_h^\dagger\). 
\end{corollary}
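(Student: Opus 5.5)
The plan has two parts: first build the block encoding of $G_h$ from \cref{lem:sparse-block-encoding}, then assemble $H_G$ from block encodings of $G_h$ and $G_h^\dagger$ by standard off-diagonal and linear-combination constructions.

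\emph{Sparse-access encoding of $G_h$.} Under the stated assumptions (fixed dimension $d$, fixed polynomial degree, shape-regular uniform mesh, bounded coefficients), each row of $G_h$ is indexed by an element--quadrature pair $(K,q)$ and a gradient component. Such a row couples only the $\bigO(1)$ basis functions supported on $K$. Each column, indexed by a basis function $\phi_j$, couples only the $\bigO(1)$ elements in its patch. Hence $G_h$ is $s$-sparse with $s=\bigO(1)$. In the mass-normalized coordinates of \cref{subsec:galerkin-discretization}, each entry has the form $\sqrt{w_{Kq}}\,a^{1/2}(x_{Kq})\nabla\phi_j(x_{Kq})$, rescaled by local mass factors. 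I will bound these entries by $\bigO(h^{-1})$ using $|\nabla\phi_j|=\bigO(h^{-1})$, $a\le a_{\max}$, and the observation that the quadrature weights $w_{Kq}=\Theta(h^d)$ are cancelled by the $h^{-d}$ mass normalization. Computing these entries is what the sparse-access oracles must do; for a uniform mesh the indexing is arithmetic, so the oracles are efficient. \cref{lem:sparse-block-encoding} then gives a block encoding of $G_h$ with $\alpha=\bigO(sM_{\max})=\bigO(h^{-1})$. For $G_h^\dagger$ the sparsity pattern is transposed and the entries are conjugated, so the same oracles with the roles of row and column swapped give the same normalization. Together with \eqref{eq:Gh-scaling}, this also shows $\alpha_G=\Theta(\norm{G_h})$.

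\emph{Assembling $H_G$.} Write
\[
H_G=\ket{0}\!\bra{1}\otimes(-iG_h^\dagger)+\ket{1}\!\bra{0}\otimes(iG_h).
\]
If the rectangular $G_h$ is $N_q\times N_h$, first pad both blocks to a common dimension. Then use one extra qubit to select between the two terms. Controlled on this qubit, apply the block encoding of $G_h$ or of $G_h^\dagger$, and flip the qubit so that the $\ket{0}\to\ket{1}$ transition carries $iG_h$ and the reverse carries $-iG_h^\dagger$; the phases $\pm i$ are single-qubit phase gates. Since the two terms act on orthogonal input subspaces, the selected unitary block-encodes $H_G/\alpha_G$ directly, with no linear-combination-of-unitaries subnormalization loss beyond a constant. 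This uses one call each to the encodings of $G_h$ and $G_h^\dagger$ and $\bigO(1)$ extra ancillas. Hermiticity of $H_G$ is immediate, since the $(2,1)$ block $iG_h$ is the adjoint of the $(1,2)$ block $-iG_h^\dagger$. Therefore $\alpha_H=\alpha_G=\bigO(h^{-1})$.

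\emph{Main obstacle.} The delicate step is the entry-size bound in the normalized coordinates. The mass-matrix transformation mentioned in \cref{subsec:galerkin-discretization} is not sparse if one uses $M^{-1/2}$ exactly. I would first try a lumped (diagonal) mass matrix, which is spectrally equivalent to the true mass matrix on quasi-uniform meshes. This keeps sparsity and gives diagonal scalings of order $h^{-d/2}$. I would then argue that this spectral equivalence preserves $\norm{G_h}=\Theta(h^{-1})$ and $\sigma_{\min}(G_h)=\Theta(1)$, up to constants.
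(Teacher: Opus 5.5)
Your proposal is correct and takes the same route as the paper. The paper supports the corollary only through the discussion preceding it: $\bigO(1)$ sparsity of $G_h$ and $\bigO(h^{-1})$ entries after mass scaling are fed into \cref{lem:sparse-block-encoding}, and $H_G$ then follows by standard block-encoding arithmetic, which your controlled-flip (Hermitian dilation) construction makes explicit with $\alpha_H=\alpha_G$.

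Your diagonal-mass remark makes explicit a point the paper leaves implicit: normalizing with the exact $M^{-1/2}$ would destroy sparsity. The fix is fine for $P_1$, since on the paper's uniform 2D mesh the lumped mass is exactly $h^2 I$. For higher fixed degree you should use $\operatorname{diag}(M)$ rather than row-sum lumping, because row-sum lumping can fail to be spectrally equivalent to $M$ (e.g.\ it gives zero vertex weights for $P_2$ triangles).
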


\begin{remark}
Sparse access to \(G_h\) can be organized locally. Given a basis-function index, one enumerates the
elements touching its support; given an element, one enumerates its local basis functions,
quadrature nodes, and quadrature weights; and the value oracle computes the corresponding entry in
\eqref{eq:A_hb_h}. Thus sparse access is to the local element and quadrature data, not to a
preassembled matrix. The same construction extends to higher-order finite elements when the solution and coefficient
regularity justify them. The constants then depend on the polynomial degree \(p\), but for fixed
\(p\) the mesh dependence of the first-order factor remains \(\bigO(h^{-1})\).
\end{remark}

\subsection{Worst-case estimates based on spectral windows }
\label{subsec:runtime-conservative}

Many QLSAs implicitly use spectral transformations. Let  
\[
    A_h\bm v_j=\lambda_j\bm v_j,
    \qquad
    \bm b_h=\sum_j b_j\bm v_j,
    \qquad
    \bm x_\ast=A_h^{-1}\bm b_h
        =\sum_j \frac{b_j}{\lambda_j}\bm v_j .
\]

 In
HHL-type algorithms \cite{HarrowHassidimLloyd2009}, phase estimation must resolve eigenvalues throughout the promised spectral
range, and the controlled reciprocal map \(\lambda\mapsto 1/\lambda\) must be implemented uniformly
on that range. In QSVT-based QLSA, the same issue appears as a polynomial approximation problem:
one constructs a polynomial that approximates \(1/\lambda\) on a normalized interval such as
\[
    \lambda\in[1/\kappa,1].
\]
The degree of this polynomial is governed by the worst-case endpoint \(1/\kappa\). Thus the
algorithm is designed to succeed uniformly over the promised interval, even if the input vector has
very little weight in the most ill-conditioned modes \cite{TongAnWiebeLin2021}.

It is useful to contrast this with classical solvers, e.g., Krylov or conjugate gradient, which can also be regarded as a
polynomial approximation method: after \(k\) iterations,
\[
    \bm x_k \in \bm x_0+\mathcal K_k(A_h,\bm r_0),
    \qquad
    \mathcal K_k(A_h,\bm r_0)
    =
    \operatorname{span}\{\bm r_0,A_h\bm r_0,\ldots,A_h^{k-1}\bm r_0\},
\]
where \(\bm r_0=\bm b_h-A_h\bm x_0\). These polynomials, however, are not fixed in
advance by a worst-case approximation problem on the full spectral interval. They are generated by
an orthogonalization process starting from the actual residual \(\bm r_0\), and therefore depend on the
spectral measure of \(A_h\) induced by the particular right-hand side and initial guess \cite{Saad2003Iterative}. Just as
importantly, conjugate gradient carries \(\bm r_k\) throughout the computation and uses
\(\norm{\bm r_k}\),  as a built-in stopping test. Thus classical
 methods combine polynomial approximation with an adaptive residual-based stopping rule
\cite{Saad2003Iterative,GolubVanLoan2013Matrix}.

For elliptic PDE inputs, the worst-case design may be conservative, as alluded to in the introduction. If \(f\) is smooth, then the
coefficients \(b_j\) often decay rapidly with the frequency of the eigenmode, thus leading to a much smaller effective spectral window. This naturally leads to the interesting question of whether the residual error can be built into a quantum algorithm.   
The algebraic meaning of the residual is simple. Given an approximate solution \(\bm x_h\), define
\[
    \bm r=\bm b_h-A_h\bm x_h .
\]
If $\bm x_\ast$ and \(\bm e=\bm x_\ast-\bm x_h\) denotes the actual solution and the error, respectively, then
\[
    A_h\bm e_h=\bm r.
\]
Thus the residual is directly tied to the error equation and provides an a posteriori indicator of
convergence.

\section{ODE-based solvers and residual dynamics}
\label{sec:ode-solver}

We now turn the elliptic linear system
\[
    A_h\bm x_\ast=\bm b_h,
    \qquad
    A_h=G_h^\dagger G_h,
\]
into a stable first-order dynamical system. The construction is equation-based: instead of applying
a QLSA directly to the stiffness matrix \(A_h\), we use the divergence--gradient factor \(G_h\) to
build a relaxation dynamics whose limiting solution block is the Galerkin solution.

\subsection{Accumulator via ODE relaxation}
\label{subsec:first-order-ode-no-residual}

The factorization \(A_h=G_h^\dagger G_h\) says that the solution can be characterized by the
first-order relations
\begin{equation}
\label{eq:first-order-constraints}
    \bm q=G_h\bm x,
    \qquad
    G_h^\dagger\bm q=\bm b_h .
\end{equation}
Here \(\bm q\) is the discrete gradient, or flux-like, variable. If one evolves the pair
\((\bm x,\bm q)\) directly and then prepares the normalized joint state, the \(\bm q\)-block may
dominate the norm because \(\bm q_\ast=G_h\bm x_\ast\) and
\(\norm{G_h}=\bigO(h^{-1})\). For state preparation, it is preferable to organize the dynamics so
that \(\bm x(t)\) itself is the accumulator block.

Define the first-order residuals
\begin{equation}
\label{eq:residual-definitions}
    \bm r:=\bm b_h-G_h^\dagger\bm q,
    \qquad
    \bm s:=G_h\bm x-\bm q .
\end{equation}
Here \(\bm r\) can be interpreted as the conservation residual and \(\bm s\) as the flux residual.
We keep \(\bm q\) implicit and evolve
\begin{equation}
\label{eq:accumulator-ode}
\left\{
\begin{aligned}
\dot{\bm x} &= \bm r,\\
\dot{\bm r} &= -G_h^\dagger\bm s,\\
\dot{\bm s} &= G_h\bm r-\bm s,
\end{aligned}
\right.
\qquad
\bm x(0)=0,\quad
\bm r(0)=\bm b_h,\quad
\bm s(0)=0 .
\end{equation}
When needed, the flux variable is recovered from
\begin{equation}
\label{eq:q-implicit}
    \bm q(t):=G_h\bm x(t)-\bm s(t).
\end{equation}

\iffalse 
This follows from a direct differentiation, 
\[
    \dot{\bm q}
    =
    G_h\dot{\bm x}-\dot{\bm s}
    =
    G_h\bm r-(G_h\bm r-\bm s)
    =
    \bm s,
\]
and
\[
    \frac{d}{dt}\bigl(\bm b_h-G_h^\dagger\bm q(t)\bigr)
    =
    -G_h^\dagger\bm s(t)
    =
    \dot{\bm r}(t).
\]
Together with the initial conditions, this gives
\[
    \bm r(t)=\bm b_h-G_h^\dagger\bm q(t),
    \qquad
    \bm s(t)=G_h\bm x(t)-\bm q(t)
\]
for all \(t\). Thus \eqref{eq:accumulator-ode} preserves the same first-order residual identities,
while avoiding the need to store \(\bm q\) as part of the output state.
\fi 

To write the dynamics in a compact form, we let
\[
    \bm w(t)=
    \begin{bmatrix}
    \bm r(t)\\
    \bm s(t)
    \end{bmatrix},
    \qquad
    P_r=\begin{bmatrix}I&0\end{bmatrix},
    \qquad
    M_h=
    \begin{bmatrix}
    0&-G_h^\dagger\\
    G_h&-I
    \end{bmatrix}.
\]
Then
\begin{equation}
\label{eq:x-w-system}
    \dot{\bm x}=P_r\bm w,
    \qquad
    \dot{\bm w}=M_h\bm w .
\end{equation}
Equivalently, with
\begin{equation}
\label{eq:joint-z}
    \bm z(t)=
    \begin{bmatrix}
    \bm x(t)\\
    \bm w(t)
    \end{bmatrix},
    \qquad
    L_h=
    \begin{bmatrix}
    0&P_r\\
    0&M_h
    \end{bmatrix},
\end{equation}
the dynamics can be wriiten as, 
\begin{equation}
\label{eq:joint-system}
    \dot{\bm z}=L_h\bm z .
\end{equation}

 The generator has the first-order mesh scale
\begin{equation}
\label{eq:Lacc-scale}
    \norm{L_h}
    \le
    1+\norm{M_h}
    =
    \bigO(h^{-1}).
\end{equation}
The unbounded mesh dependence enters only through \(G_h\) and \(G_h^\dagger\).

Below we use the standard elliptic finite element setting: the space is conforming, the mesh is
shape-regular and quasi-uniform, and homogeneous Dirichlet boundary conditions are imposed. 

\begin{theorem}[Exponential stability ]
\label{thm:first-order-relaxation}
\label{thm:steady-state}
Let \((\bm x(t),\bm w(t))\) solve \eqref{eq:x-w-system} with
\[
    \bm x(0)=0,
    \qquad
    \bm w(0)=
    \begin{bmatrix}
    \bm b_h\\0
    \end{bmatrix}.
\]
Then \(M_h\) is uniformly exponentially stable: there exist constants
\(C_{\rm st}\ge1\) and \(c_{\rm st}>0\), independent of \(h\), such that
\begin{equation}
\label{eq:semigroup-stability}
    \norm{e^{tM_h}}
    \le
    C_{\rm st}e^{-c_{\rm st}t},
    \qquad t\ge0 .
\end{equation}
Consequently,
\begin{equation}
\label{eq:residual-decay}
    \norm{\bm w(t)}
    \le
    C_{\rm st}e^{-c_{\rm st}t}\norm{\bm w(0)} .
\end{equation}
Moreover, \(\bm x(t)\) converges to a limit \(\bm x_\ast\), and this limit solves
\[
    G_h^\dagger G_h\bm x_\ast=\bm b_h .
\]
In particular, 
\begin{equation}\label{eq:x-tail-bound}
     \norm{\bm x_\ast-\bm x(t)}
    \le
    C_{\rm tail} C_{\rm st}e^{-c_{\rm st}t}\norm{\bm w(0)}, 
\end{equation}
where
\[
    C_{\rm tail}:=\frac{C_{\rm st}}{c_{\rm st}} .
\]
\end{theorem}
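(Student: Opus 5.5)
The plan is to decouple \(M_h\) with the singular value decomposition of \(G_h\), prove uniform decay with a Lyapunov functional (not just eigenvalue bounds), and then read off the statements about \(\bm x(t)\) by integration and from the residual identities in \eqref{eq:residual-definitions}.

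\emph{Step 1 (spectral picture).} Write \(G_h=U\Sigma V^\dagger\) with singular values \(\sigma_j\in[\sigma_{\min}(G_h),\norm{G_h}]\). By \eqref{eq:Gh-scaling}, \(\sigma_{\min}(G_h)\ge\sigma_0>0\) independently of \(h\), so \(G_h\) is injective. In these coordinates \(M_h\) splits into two kinds of pieces.
\begin{itemize}[leftmargin=*]
\item The \(\bm s\)-components in \(\ker G_h^\dagger\) obey \(\dot{\bm s}=-\bm s\) and decay at rate \(1\).
\item For each \(j\), there is a \(2\times2\) block \(\begin{bmatrix}0&-\sigma_j\\ \sigma_j&-1\end{bmatrix}\) with eigenvalues \(\tfrac12\bigl(-1\pm\sqrt{1-4\sigma_j^2}\bigr)\). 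Their real parts are at most \(-\min(\tfrac12,c\,\sigma_0^2)\).
\end{itemize}
This gives a uniform spectral gap. However, it does not by itself give a uniform constant \(C_{\rm st}\), because the blocks are non-normal and the two eigenvalues coalesce at \(\sigma_j=1/2\). I expect this uniformity to be the main obstacle.

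\emph{Step 2 (uniform semigroup bound via a Lyapunov functional).} I would first try a perturbed energy. The plain energy \(E=\norm{\bm r}^2+\norm{\bm s}^2\) only gives \(\dot E=-2\norm{\bm s}^2\), which is not coercive in \(\bm r\). Instead, for one block with scalar \((r,s)\), set
\[
V_\delta=r^2+s^2-\tfrac{2\delta}{\sigma}\,rs .
\]
From \(\tfrac{d}{dt}(rs)=\sigma(r^2-s^2)-rs\) one gets
\[
\dot V_\delta=-2s^2-2\delta r^2+2\delta s^2+\tfrac{2\delta}{\sigma}rs .
\]
Using \(\sigma\ge\sigma_0\) and Young's inequality, a fixed small \(\delta\) depending only on \(\sigma_0\) gives \(\dot V_\delta\le-c\,(r^2+s^2)\). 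Also \(V_\delta\) is equivalent to \(r^2+s^2\) with constants \(1\pm\delta/\sigma_0\). Globally this functional is
\[
V_\delta(\bm w)=\norm{\bm w}^2-2\delta\,\mathrm{Re}\langle \bm r,(G_h^{+})\bm s\rangle,
\qquad G_h^{+}=(G_h^\dagger G_h)^{-1}G_h^\dagger .
\]
Grönwall's inequality then yields \eqref{eq:semigroup-stability}, with \(C_{\rm st}=\sqrt{(1+\delta/\sigma_0)/(1-\delta/\sigma_0)}\) and \(c_{\rm st}\) depending only on \(\sigma_0\), hence independent of \(h\). Estimate \eqref{eq:residual-decay} follows immediately.

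\emph{Step 3 (the \(\bm x\)-block).} Since \(\bm x(t)=\int_0^t P_r\bm w(\tau)\dd\tau\) and \(\norm{P_r}\le1\), the integral converges absolutely, so \(\bm x(t)\to\bm x_\ast\). The tail is bounded by
\[
\norm{\bm x_\ast-\bm x(t)}\le\int_t^\infty C_{\rm st}e^{-c_{\rm st}\tau}\norm{\bm w(0)}\dd\tau
= C_{\rm tail}e^{-c_{\rm st}t}\norm{\bm w(0)},
\]
which implies \eqref{eq:x-tail-bound} because \(C_{\rm st}\ge1\).

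To identify the limit, define \(\bm q(t)=G_h\bm x(t)-\bm s(t)\) as in \eqref{eq:q-implicit}. Differentiating gives \(\dot{\bm q}=G_h\bm r-(G_h\bm r-\bm s)=\bm s\). It also gives \(\tfrac{d}{dt}(\bm b_h-G_h^\dagger\bm q)=-G_h^\dagger\bm s=\dot{\bm r}\). Together with the initial data, this shows that \(\bm r(t)=\bm b_h-G_h^\dagger\bm q(t)\) for all \(t\). Letting \(t\to\infty\), we have \(\bm r,\bm s\to0\) and \(\bm q\to G_h\bm x_\ast\), hence \(G_h^\dagger G_h\bm x_\ast=\bm b_h\).
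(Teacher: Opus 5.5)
Your proposal is correct and follows essentially the paper's proof. Your $V_\delta$ with $G_h^{+}=(G_h^\dagger G_h)^{-1}G_h^\dagger$ is exactly the paper's $\mathcal E$ with $K_h=A_h^{-1}G_h^\dagger$, your smallness conditions on $\delta$ are the paper's conditions $0<\eta<\gamma_0$ and $2(1-\eta)-\eta/\gamma_0^2>0$, and you identify the limit through $\bm q=G_h\bm x-\bm s$ in the same way; the paper just works with the projector $P_h=G_hA_h^{-1}G_h^\dagger$ rather than SVD blocks. The only minor difference is the tail: you integrate the decay from $\bm w(0)$ and get the sharper bound $C_{\rm tail}e^{-c_{\rm st}t}\norm{\bm w(0)}$, whereas the paper restarts the semigroup at time $t$ to obtain $\norm{\bm x_\ast-\bm x(t)}\le C_{\rm tail}\norm{\bm w(t)}$, which is the form the later residual-based stopping rule relies on.
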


The proof is postponed to \cref{appA}.

The theorem holds for any discrete vector \(\bm b_h\). When \(\bm b_h\) is obtained from a function
\(f\), the finite element estimate \eqref{eq:L2-error} describes the discretization error, while
\eqref{eq:x-tail-bound} describes the finite-time relaxation error.

\cref{eq:joint-z} will be referred to as an accumulator, which on one hand encodes the solution of \cref{eq:Ah-system}, and on the other hand, gives a useful output-block estimate. Define, for nonzero
\(\bm x_\ast\),
\begin{equation}\label{Gb}
     \Gamma_b:=
    \frac{\norm{\bm w(0)}}{\norm{\bm x_\ast}}
    =
    \frac{\norm{\bm b_h}}{\norm{\bm x_\ast}} .
\end{equation}
Thus choosing
\[
    T
    \ge
    c_{\rm st}^{-1}
    \log\!\left(
    \frac{4C_{\rm st}(1+C_{\rm tail})\Gamma_b}{\eps}
    \right)
\]
ensures
\[
    \norm{\bm x(T)-\bm x_\ast}
    \le
    \frac{\eps}{4}\norm{\bm x_\ast},
    \qquad
    \norm{\bm w(T)}
    \le
    \frac{\eps}{4}\norm{\bm x_\ast}.
\]
Consequently,
\[
    \norm{\bm x(T)}
    =
    (1+\bigO(\eps))\norm{\bm x_\ast},
\]
and the  solution block $\bm z(t)$ has order-one weight in the joint state:
\begin{equation}
\label{eq:px-constant}
    p_x(T)
    :=
    \frac{\norm{\bm x(T)}^2}
    {\norm{\bm x(T)}^2+\norm{\bm w(T)}^2}
    =
    1-\bigO(\eps^2).
\end{equation}
Thus extracting the \(\bm x\)-block from the accumulator state has constant overhead once the ODE
has relaxed.

\subsection{Quantum implementation and first-order simulation scale}
\label{subsec:dilation}

The accumulator ODE \eqref{eq:joint-system} is the dynamical system implemented by the quantum
ODE primitive. It contains the solution accumulator \(\bm x(t)\) and the decaying residual block
\(\bm w(t)\).   Since the ODE dynamics are non-unitary, the simulation complexity usually depends on the non-Hermitian structure of the generator \cite{JinLiuYu2024,AnChildsLin2023,li2025linear}. For   \eqref{eq:joint-system},  
the generator admits the following decomposition, 
\[
    M_h=-iH_h^{(1)}-\Pi_s,
    \qquad
    H_h^{(1)}
    =
    \begin{bmatrix}
    0&-iG_h^\dagger\\
    iG_h&0
    \end{bmatrix},
    \qquad
    \Pi_s=
    \begin{bmatrix}
    0&0\\
    0&I
    \end{bmatrix}.
\]
In particular, the PDE-dependent Hermitian part has norm
\[
    \norm{H_h^{(1)}}=\norm{G_h}=\bigO(h^{-1}),
\]
while the non-Hermitian damping has norm \(\bigO(1)\). Equivalently, the full accumulator
generator has the scale separation
\[
    L_h=-iH_{\rm big}+K_{\rm big},
\]
where
\[
    H_{\rm big}
    =
    \begin{bmatrix}
    0&0\\
    0&H_h^{(1)}
    \end{bmatrix},
    \qquad
    K_{\rm big}
    =
    \begin{bmatrix}
    0&P_r\\
    0&-\Pi_s
    \end{bmatrix},
\]
with
\[
    \norm{H_{\rm big}}=\bigO(h^{-1}),
    \qquad
    \norm{K_{\rm big}}=\bigO(1).
\]
The fact that the non-Hermitian part has much smaller scaling of $\bigO(1)$ leads to efficient simulation algorithms \cite{JinLiuYu2024,AnChildsLin2023,li2025linear}. 

We now switch to quantum-state notation
\[
    \ket{\bm b_h}:=\frac{\bm b_h}{\norm{\bm b_h}},
    \qquad
    \ket{\bm x_\ast}:=\frac{\bm x_\ast}{\norm{\bm x_\ast}},
\]
and invoke quantum ODE solvers. 

\begin{theorem}[QLSA via a first-order quantum ODE solver]
\label{thm:quantum-first-order-ode}
Assume sparse-access or block-encoding access to \(G_h\) as in
Corollary~\ref{cor:Gh-block-encoding}, and assume that $\bm b_h$ is normalized and \(\ket{\bm b_h}\) can be prepared with
cost \(C_b\). Consider the accumulator ODE \eqref{eq:joint-system} with initial state
\[
    \bm z(0)=
    \begin{bmatrix}
    0\\
    \bm b_h\\
    0
    \end{bmatrix}.
\]
Let \(T\) be chosen so that \(\ket{\bm x(T)}\) is \(\eps\)-close to \(\ket{\bm x_\ast}\) according to \cref{eq:x-tail-bound}. Then an
optimal linear-ODE primitive for nonunitary dynamics, such as the LCHS framework
\cite{AnChildsLin2023}, can prepare an \(\eps\)-accurate normalized solution state with leading
cost
\begin{equation}
\label{eq:qode-accumulator-cost}
    \widetildeO\!\left(
        \frac{\norm{\bm b_h}}{\norm{\bm x_\ast}}
        \left[
        C_b+
        h^{-1}\polylog\frac1\eps
        \right]
    \right),
\end{equation}
up to method-dependent logarithmic factors. If \(\bm b_h\) is normalized before state preparation,
the prefactor is equivalently the standard prefactor in QLSA
\[
    \frac{1}{\norm{A_h^{-1}\ket{\bm b_h}}}.
\]
Moreover, after relaxation the \(\bm x\)-block has probability
\(p_x(T)=1-\bigO(\eps^2)\) in the joint state. Thus extracting the solution block adds only
constant overhead and does not introduce an additional \(h\)-dependent factor.
\end{theorem}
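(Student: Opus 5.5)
The proof combines the stability estimate of \cref{thm:first-order-relaxation} with a black-box complexity statement for stable nonunitary linear ODE solvers such as LCHS \cite{AnChildsLin2023}. I will treat the primitive abstractly. Given a block encoding of a generator $L=-iH+K$ with $K+K^\dagger\preceq0$ (or, more generally, a uniformly bounded semigroup), a state-preparation unitary for $\bm z(0)$, and a time $T$, the primitive outputs $\ket{\bm z(T)}$ to accuracy $\eps'$. Its cost is $\widetildeO\big(\tfrac{\norm{\bm z(0)}}{\norm{\bm z(T)}}[C_b+\alpha_L T\polylog(1/\eps')]\big)$, where $\alpha_L$ is the block-encoding normalization of $L$.

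\textbf{Step 1: block encoding and normalization.} The decomposition $L_h=-iH_{\rm big}+K_{\rm big}$ is built from \cref{cor:Gh-block-encoding}. That result gives $\alpha_{H}=\bigO(h^{-1})$ for $H_h^{(1)}$. The damping part $K_{\rm big}$ is a fixed $\bigO(1)$-norm matrix made of identities and projectors, so it has an $\bigO(1)$ block encoding at constant cost. A linear combination of unitaries then gives $\alpha_L=\bigO(h^{-1})$.

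\textbf{Step 2: stability of the generator.} LCHS-type primitives need a dissipativity or stability condition, and this is the main obstacle. $K_{\rm big}$ is not negative semidefinite, because of the off-diagonal coupling $P_r$ from $\bm w$ into $\bm x$. Also, $L_h$ has a zero eigenvalue on the $\bm x$-block. I would first argue that the primitive only needs $\sup_{0\le t\le T}\norm{e^{tL_h}}$ bounded. From the block structure,
\[
e^{tL_h}=\begin{bmatrix}I&\int_0^t P_re^{\tau M_h}\,d\tau\\0&e^{tM_h}\end{bmatrix},
\]
and \eqref{eq:semigroup-stability} bounds this by $1+C_{\rm tail}+C_{\rm st}$, uniformly in $t$ and $h$. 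If the chosen primitive insists on a dissipative real part, my fallback is to change to an equivalent energy norm, namely the Lyapunov norm built in the proof of \cref{thm:first-order-relaxation}. In that norm $M_h$ is contractive, and the $\bm x$-coupling can be absorbed by a small weight on the $\bm x$-block. The similarity transform has $\bigO(1)$ condition number, independent of $h$, so it costs only constant factors.

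\textbf{Step 3: time, error, and normalization.} Choose $T=c_{\rm st}^{-1}\log\big(4C_{\rm st}(1+C_{\rm tail})\Gamma_b/\eps\big)$ as in the discussion after \cref{thm:first-order-relaxation}. This gives $\norm{\bm x(T)-\bm x_\ast}\le\tfrac\eps4\norm{\bm x_\ast}$ and $\norm{\bm w(T)}\le\tfrac\eps4\norm{\bm x_\ast}$. Since $\Gamma_b\le\norm{A_h^{-1}}^{-1}\cdot$ well, more precisely $\Gamma_b\le\lambda_{\max}(A_h)=\bigO(h^{-2})$, we get $T=\bigO(\log(1/\eps)+\log h^{-1})$, which is absorbed into the $\widetildeO$. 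The normalization ratio is $\norm{\bm z(0)}/\norm{\bm z(T)}=\norm{\bm b_h}/\norm{\bm z(T)}$. By \eqref{eq:px-constant} this equals $(1+\bigO(\eps))\norm{\bm b_h}/\norm{\bm x_\ast}$, which is the stated prefactor. With $\bm b_h$ normalized, it is $1/\norm{A_h^{-1}\ket{\bm b_h}}$.

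Finally, run the primitive with $\eps'=\eps/4$ and measure the block register. By \eqref{eq:px-constant} this succeeds with probability $p_x=1-\bigO(\eps^2)$. A triangle inequality on the normalized vectors then shows the post-selected state is $\eps$-close to $\ket{\bm x_\ast}$, and the extraction overhead is $\bigO(1)$ with no $h$ dependence. Combining the three steps, the cost is $\alpha_L T\polylog(1/\eps)=h^{-1}\polylog(1/\eps)$ up to logarithmic factors, times the prefactor, which gives \eqref{eq:qode-accumulator-cost}.
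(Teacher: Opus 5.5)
Your Steps 1 and 3 and the extraction argument coincide with the paper's proof in Appendix~B. Both give $\alpha_L=\bigO(h^{-1})$, $T=\widetildeO(\log(1/\eps))$ with $\log h^{-1}$ and $\log\Gamma_b$ absorbed, the output ratio $\norm{\bm z(0)}/\norm{\bm z(T)}=(1+\bigO(\eps))\norm{\bm b_h}/\norm{\bm x_\ast}$, and $p_x(T)=1-\bigO(\eps^2)$. Step~2 goes beyond the paper, which invokes the LCHS dependence on $\alpha_LT$ without checking its hypothesis, and your concern is justified. In the $(\bm x,\bm r,\bm s)$ blocks,
\[
\tfrac12\bigl(L_h+L_h^\dagger\bigr)=\begin{bmatrix}0&\tfrac12 I&0\\ \tfrac12 I&0&0\\ 0&0&-I\end{bmatrix},
\]
which has eigenvalue $+\tfrac12$, whereas LCHS \cite{AnChildsLin2023} requires the Hermitian part of the generator to be negative semidefinite. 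Shifting by $\tfrac12$ and rescaling costs, for your $T$, a factor $e^{T/2}=\bigl(4C_{\rm st}(1+C_{\rm tail})\Gamma_b/\eps\bigr)^{1/(2c_{\rm st})}$. That is polynomial in $1/\eps$ and destroys the claimed bound. So the issue is real, but neither of your patches closes it. The first patch, that the primitive only needs $\sup_t\norm{e^{tL_h}}$ bounded, is asserted rather than argued, and it is false for LCHS. Your formula for $e^{tL_h}$ and the bound $1+C_{\rm tail}+C_{\rm st}$ are correct, but they are the right input only for a different class of primitives.

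The fallback fails for a structural reason, not a quantitative one. The first block column of $L_h$ is zero, so for every $Q\succ0$ the $(\bm x,\bm x)$ block of $QL_h+L_h^\dagger Q$ vanishes. Negative semidefiniteness then forces the $(\bm x,\bm w)$ block to vanish too, i.e.\ $Q_{xx}P_r+Q_{xw}M_h=0$. With $Q_{xw}=0$ and $Q_{xx}=\theta I$ (your weight on the $\bm x$-block) this reads $\theta P_r=0$, which is impossible for every $\theta>0$. In general it forces $Q_{xw}=-Q_{xx}P_rM_h^{-1}$, and
\[
P_rM_h^{-1}\begin{bmatrix}\bm c\\ \bm d\end{bmatrix}=A_h^{-1}\bigl(G_h^\dagger\bm d-\bm c\bigr),
\]
so every inner product in which $L_h$ is dissipative encodes $A_h^{-1}$. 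The natural choice is the shear $\bm x\mapsto\bm x-P_rM_h^{-1}\bm w$, which sends $\bm x(t)$ to the constant $\bm x_\ast$. It is indeed $\bigO(1)$-conditioned, since $\norm{M_h^{-1}}\le C_{\rm tail}$. But conditioning controls only norm distortion: block-encoding the transform, or the transformed generator, is as hard as the original solve. So it does not cost \emph{only constant factors}.

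What does work is your first branch paired with a different primitive: one whose cost is governed by $\sup_{t\le T}\norm{e^{tL}}$ rather than by dissipativity, such as the history-state approach of \cite{Krovi2023}. Feed it your bound $\sup_t\norm{e^{tL_h}}\le1+C_{\rm tail}+C_{\rm st}$. Then check that its normalization factor $\max_t\norm{\bm z(t)}/\norm{\bm z(T)}$ is within a constant of $\norm{\bm b_h}/\norm{\bm x_\ast}$, which holds because $\norm{\bm z(t)}\le(1+C_{\rm tail}+C_{\rm st})\norm{\bm b_h}$. With that substitution, your Step 3 and the extraction argument go through as written.
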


The proof is postponed to \cref{appB}.

\begin{remark}
The prefactor \(\norm{\bm b_h}/\norm{\bm x_\ast}\), or equivalently
\(1/\norm{A_h^{-1}\ket{\bm b_h}}\) for normalized input, is the standard output-normalization
factor for preparing a normalized state after nonunitary evolution. It is not caused by the
first-order auxiliary variables. For elliptic problems, this prefactor has a well defined limit: as the ratio between the $L^2$ norms of $f$ and $u$. 
\end{remark}

This dynamic approach makes another interesting connection between QLSA and quantum ODE algorithms: some ODE algorithms, by using history states, can be reduced to QLSA ~\cite{BerryChildsOstranderWang2017,Krovi2023}. The current algorithm demonstrates that some QLS problems can be solved using quantum ODE solvers.

\section{Residual measurements and dynamic stopping}
\label{sec:weak-measurement}

Recall from the previous section that the accumulator residual--solution dynamics are
\[
    \bm z(t)=
    \begin{bmatrix}
    \bm x(t)\\
    \bm w(t)
    \end{bmatrix},
    \qquad
    \dot{\bm z}=L_h\bm z,
    \qquad
    \bm w=(\bm r,\bm s)^T .
\]
Here \(\bm x(t)\) is the solution accumulator and \(\bm w(t)\) is the mixed residual block. The
purpose of this section is to use the \(\bm w\)-block as a measurable stopping signal. Instead of
choosing the runtime only from a worst-case spectral estimate, we estimate the residual-block
amplitude and use the stability estimate to decide when the current instance has entered the final
exponentially decaying regime.

\subsection{Residual probability from the joint state}

We use bra-ket notation for the amplitude-encoded blocks of the joint vector. Thus
\(\ket{\bm x(t)}\) and \(\ket{\bm w(t)}\) denote the computational-basis amplitude vectors
associated with \(\bm x(t)\) and \(\bm w(t)\), with the overall normalization written explicitly.
The dilated joint dynamics prepares, up to implementation error, a state of the form
\begin{equation}
\label{eq:unscaled-extended-state}
\ket{\psi(t)}
=
\frac{
\ket{0}\ket{\bm x(t)}+\ket{1}\ket{\bm w(t)}
}{
\sqrt{\norm{\bm x(t)}^2+\norm{\bm w(t)}^2}
}.
\end{equation}
Let
\[
    \Pi_{\res}=\ket{1}\bra{1}\otimes I
\]
be the projector onto the residual block. Then
\begin{equation}
\label{eq:pres}
p_{\res}(t)
:=
\bra{\psi(t)}\Pi_{\res}\ket{\psi(t)}
=
\frac{\norm{\bm w(t)}^2}
{\norm{\bm x(t)}^2+\norm{\bm w(t)}^2}.
\end{equation}

The connection between this probability and convergence is the tail estimate from
\cref{thm:steady-state}:
\begin{equation}
\label{eq:x-w-certificate-bound}
\norm{\bm x_\ast-\bm x(t)}
\le
C_{\rm tail}\norm{\bm w(t)},
\qquad
C_{\rm tail}:=\frac{C_{\rm st}}{c_{\rm st}} .
\end{equation}
Thus the residual block is not merely diagnostic; it controls the remaining distance to the steady
state.

\begin{lemma}[Residual probability controls the relative tail]
\label{lem:residual-probability-tail}
Assume \(\norm{\bm x(t)}>0\). Fix $p \in (0,1)$. If
\[
    p_{\res}(t)\le p,
\]
then
\[
    \frac{\norm{\bm x_\ast-\bm x(t)}}{\norm{\bm x(t)}}
    \le
    C_{\rm tail}
    \sqrt{\frac{p}{1-p}} .
\]
\end{lemma}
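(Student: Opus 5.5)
The argument combines the tail estimate \eqref{eq:x-w-certificate-bound} with an algebraic rearrangement of the definition \eqref{eq:pres} of \(p_{\res}(t)\). The idea is to convert the probability bound into a bound on the ratio \(\norm{\bm w(t)}/\norm{\bm x(t)}\), and then use \eqref{eq:x-w-certificate-bound} to pass from \(\norm{\bm w(t)}\) to the distance \(\norm{\bm x_\ast-\bm x(t)}\).

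First, write \(a=\norm{\bm x(t)}^2>0\) and \(c=\norm{\bm w(t)}^2\ge0\), so that \(p_{\res}(t)=c/(a+c)\). The hypothesis \(p_{\res}(t)\le p\) reads \(c\le p(a+c)\), which is equivalent to \((1-p)c\le pa\). Since \(p<1\), dividing by \((1-p)a>0\) gives
\[
    \frac{\norm{\bm w(t)}^2}{\norm{\bm x(t)}^2}\le \frac{p}{1-p},
    \qquad\text{so}\qquad
    \frac{\norm{\bm w(t)}}{\norm{\bm x(t)}}\le\sqrt{\frac{p}{1-p}} .
\]
Equivalently, \(p_{\res}/(1-p_{\res})=c/a\), and the map \(q\mapsto q/(1-q)\) is monotone increasing on \([0,1)\).

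Second, invoke \eqref{eq:x-w-certificate-bound}, namely \(\norm{\bm x_\ast-\bm x(t)}\le C_{\rm tail}\norm{\bm w(t)}\), and divide by \(\norm{\bm x(t)}\). Combining this with the ratio bound from the first step gives the claim.

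The only step that needs care is justifying \eqref{eq:x-w-certificate-bound} itself from \cref{thm:steady-state}. Because \(\dot{\bm w}=M_h\bm w\), the evolution is autonomous and the semigroup property holds, so \(\bm w(t+\tau)=e^{\tau M_h}\bm w(t)\). Integrating the \(\bm x\)-equation from \(t\) to \(\infty\) then gives
\[
    \bm x_\ast-\bm x(t)=\int_0^\infty P_r e^{\tau M_h}\bm w(t)\,\dd\tau .
\]
Applying \eqref{eq:semigroup-stability} together with \(\norm{P_r}\le1\) bounds the right-hand side by \(\int_0^\infty C_{\rm st}e^{-c_{\rm st}\tau}\,\dd\tau\,\norm{\bm w(t)}=(C_{\rm st}/c_{\rm st})\norm{\bm w(t)}=C_{\rm tail}\norm{\bm w(t)}\). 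This is the restart form of the tail estimate, and it is what the lemma requires.
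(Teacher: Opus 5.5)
Your proposal is correct and matches the paper's proof: you convert $p_{\res}(t)\le p$ into $\norm{\bm w(t)}^2/\norm{\bm x(t)}^2\le p/(1-p)$, then apply the tail bound \eqref{eq:x-w-certificate-bound}. Your semigroup derivation of the restart form $\norm{\bm x_\ast-\bm x(t)}\le C_{\rm tail}\norm{\bm w(t)}$ is the same argument the paper gives in Appendix~\ref{appA}.
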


\begin{proof}
From \eqref{eq:pres},
\[
    \frac{\norm{\bm w(t)}^2}{\norm{\bm x(t)}^2}
    =
    \frac{p_{\res}(t)}{1-p_{\res}(t)} .
\]
Combining this identity with \eqref{eq:x-w-certificate-bound} gives the claim.
\end{proof}

\subsection{Constant-threshold readout and deterministic coasting}
\label{subsec:constant-threshold-coasting}

A direct stopping rule based on the final tolerance would require
\[
    p_{\res}(t)
    \lesssim
    \frac{\eps^2}{C_{\rm tail}^2+\eps^2},
\]
which is an \(\mathcal O(\eps^2)\)-probability event. Estimating such a small probability by
direct sampling would require \(\mathcal O(\eps^{-2})\) shots. We avoid this by using the
exponential decay of the residual block.

Choose a fixed constant threshold \(p_0\in(0,1)\), independent of \(\eps\), and define
\[
    \eta_0:=\sqrt{\frac{p_0}{1-p_0}},
    \qquad
    \beta_0:=C_{\rm tail}\eta_0 .
\]
We choose \(p_0\) so that
\[
    \beta_0<1 .
\]

Suppose that a checkpoint time \(t_k\) satisfies
\[
    p_{\res}(t_k)\le p_0 .
\]
Then
\[
    \norm{\bm w(t_k)}
    \le
    \eta_0\norm{\bm x(t_k)}
\]
and, by Lemma \ref{lem:residual-probability-tail},
\[
    \norm{\bm x_\ast-\bm x(t_k)}
    \le
    \beta_0\norm{\bm x(t_k)} .
\]
Since \(\beta_0<1\), the current accumulator norm is comparable to the solution norm:
\[
    \norm{\bm x(t_k)}
    \le
    \frac{1}{1-\beta_0}\norm{\bm x_\ast}.
\]
Therefore,
\[
    \norm{\bm w(t_k)}
    \le
    \frac{\eta_0}{1-\beta_0}\norm{\bm x_\ast}.
\]

Now continue the same ODE for an additional deterministic time \(\Delta t\). Set
\[
    T_\star=t_k+\Delta t.
\]
The residual decay gives
\[
    \norm{\bm w(T_\star)}
    \le
    C_{\rm st}e^{-c_{\rm st}\Delta t}\norm{\bm w(t_k)} .
\]
Combining this with the tail estimate gives
\[
    \norm{\bm x_\ast-\bm x(T_\star)}
    \le
    \frac{C_{\rm tail}C_{\rm st}\eta_0}{1-\beta_0}
    e^{-c_{\rm st}\Delta t}
    \norm{\bm x_\ast}.
\]
Thus it is sufficient to choose
\begin{equation}
\label{eq:coasting-time}
    \Delta t
    \ge
    c_{\rm st}^{-1}
    \log\!\left(
    \frac{C_{\rm tail}C_{\rm st}\eta_0}{(1-\beta_0)\eps}
    \right)
\end{equation}
to ensure the relative error reaches the given precision
\[
    \norm{\bm x_\ast-\bm x(T_\star)}
    \le
    \eps\norm{\bm x_\ast}.
\]

This calculation suggests the stopping strategy used below: the checkpoint measurements only test for entry into
a constant-residual regime, while the final accuracy is supplied by a deterministic coasting time $\Delta t$
computed from the stability estimate. Hence the sampling problem is a constant-probability
readout, rather than an \(\eps^2\)-probability readout.

\begin{theorem}[Constant-threshold dynamic stopping]
\label{thm:constant-threshold-stopping}
Let \(p_0\in(0,1)\) satisfy
\[
    C_{\rm tail}\sqrt{\frac{p_0}{1-p_0}}<1 .
\]
Suppose that \(t_{\rm ent}\) is the first checkpoint time at which
\begin{equation}\label{p<p0}
     p_{\res}(t_{\rm ent})\le p_0 .
\end{equation}

Let \(T_\star=t_k+\Delta t\), with \(\Delta t\) chosen as in
\eqref{eq:coasting-time}. Then
\[
    \norm{\bm x_\ast-\bm x(T_\star)}
    \le
    \eps\norm{\bm x_\ast}.
\]
Moreover, the final joint state
\[
    \begin{bmatrix}
    \bm x(T_\star)\\
    \bm w(T_\star)
    \end{bmatrix}
\]
has solution-block probability \(1-\mathcal O(\eps^2)\).
\end{theorem}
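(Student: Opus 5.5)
The proof assembles the chain of estimates already derived in \cref{subsec:constant-threshold-coasting} into a single argument, with $t_k=t_{\rm ent}$. The only properties of $t_{\rm ent}$ we use are \eqref{p<p0} and that it is a time along the trajectory; the fact that it is the \emph{first} accepting checkpoint does not enter the accuracy claim.

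\textbf{From the accepted test to $\norm{\bm w(t_{\rm ent})}\lesssim\norm{\bm x_\ast}$.} From \eqref{p<p0} and \eqref{eq:pres} we get $\norm{\bm w(t_{\rm ent})}\le\eta_0\norm{\bm x(t_{\rm ent})}$. Here $\norm{\bm x(t_{\rm ent})}>0$, since otherwise $p_{\res}=1>p_0$ whenever $\bm w\neq 0$. Applying \eqref{eq:x-w-certificate-bound} then gives $\norm{\bm x_\ast-\bm x(t_{\rm ent})}\le\beta_0\norm{\bm x(t_{\rm ent})}$. The triangle inequality yields $(1-\beta_0)\norm{\bm x(t_{\rm ent})}\le\norm{\bm x_\ast}$. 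The assumption $\beta_0<1$ is exactly what lets us divide here, and it is the only place it is needed. We conclude $\norm{\bm w(t_{\rm ent})}\le \eta_0(1-\beta_0)^{-1}\norm{\bm x_\ast}$.

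\textbf{Coasting.} Because $\dot{\bm w}=M_h\bm w$ is autonomous, $\bm w(T_\star)=e^{\Delta t M_h}\bm w(t_{\rm ent})$. By \eqref{eq:semigroup-stability},
\[
\norm{\bm w(T_\star)}\le C_{\rm st}e^{-c_{\rm st}\Delta t}\norm{\bm w(t_{\rm ent})}.
\]
Next we need the tail bound \eqref{eq:x-w-certificate-bound} at time $T_\star$. I would verify it directly, since this is the step that truly requires care. Write $\bm x_\ast-\bm x(t)=\int_t^\infty P_r e^{(\tau-t)M_h}\bm w(t)\,d\tau$, which follows from the convergence in \cref{thm:steady-state}. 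Its norm is at most $\int_0^\infty C_{\rm st}e^{-c_{\rm st}\sigma}\,d\sigma\,\norm{\bm w(t)}=C_{\rm tail}\norm{\bm w(t)}$, and this holds for every $t$, not only $t=0$.

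Combining the pieces,
\[
\norm{\bm x_\ast-\bm x(T_\star)}\le C_{\rm tail}C_{\rm st}\eta_0(1-\beta_0)^{-1}e^{-c_{\rm st}\Delta t}\norm{\bm x_\ast},
\]
and \eqref{eq:coasting-time} makes the right-hand side at most $\eps\norm{\bm x_\ast}$.

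\textbf{Solution-block probability.} The same computation gives $\norm{\bm w(T_\star)}\le C_{\rm st}\eta_0(1-\beta_0)^{-1}e^{-c_{\rm st}\Delta t}\norm{\bm x_\ast}\le (\eps/C_{\rm tail})\norm{\bm x_\ast}=\bigO(\eps)\norm{\bm x_\ast}$. Also $\norm{\bm x(T_\star)}\ge(1-\eps)\norm{\bm x_\ast}$. Therefore
\[
p_{\res}(T_\star)\le \norm{\bm w(T_\star)}^2/\norm{\bm x(T_\star)}^2=\bigO(\eps^2),
\]
so the solution-block probability is $1-p_{\res}(T_\star)=1-\bigO(\eps^2)$.

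The main obstacle is really only bookkeeping: making sure the tail inequality is applied at the shifted time $T_\star$ via the semigroup and integral representation, rather than the $t=0$ form stated in \eqref{eq:x-tail-bound}. Note that the bound on $\norm{\bm w(T_\star)}$ carries a factor $C_{\rm tail}^{-1}$, so if $C_{\rm tail}$ is small the constant hidden in $\bigO(\eps)$ depends on it.
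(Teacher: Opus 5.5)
Your proposal is correct and follows the paper's argument essentially step for step. The chain $p_{\res}(t_{\rm ent})\le p_0\Rightarrow\norm{\bm w(t_{\rm ent})}\le\eta_0\norm{\bm x(t_{\rm ent})}\Rightarrow\norm{\bm x(t_{\rm ent})}\le(1-\beta_0)^{-1}\norm{\bm x_\ast}$, followed by semigroup decay over $\Delta t$ and the tail bound at $T_\star$, is exactly the derivation in the constant-threshold subsection. The tail bound $\norm{\bm x_\ast-\bm x(t)}\le C_{\rm tail}\norm{\bm w(t)}$ for arbitrary $t$ is proved in Appendix A by the same semigroup integral you write down.

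Your explicit verification of the $1-\bigO(\eps^2)$ solution-block probability fills in a step the paper leaves implicit (it is analogous to \eqref{eq:px-constant}). Your caveat about small $C_{\rm tail}$ is harmless but in fact moot: $\tfrac{d}{dt}\norm{\bm w}^2=-2\norm{\bm s}^2\ge-2\norm{\bm w}^2$ forces $c_{\rm st}\le1$, and hence $C_{\rm tail}\ge1$.
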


\subsection{Checkpoint sampling}
\label{subsec:checkpoint-sampling}
Now we explain how to detect the check point that reaches the threshhold  \cref{p<p0}.

Fix a checkpoint schedule
\[
    0<t_0<t_1<\cdots .
\]
For each candidate time \(t_k\), a checkpoint circuit prepares a fresh copy of the initial state,
evolves the dilated ODE to time \(t_k\), measures the residual flag, and then discards the state.
Thus one checkpoint circuit at time \(t_k\) produces one Bernoulli sample with success probability
\[
    p_k:=p_{\res}(t_k).
\]
If \(N_{\rm shot}\) independent checkpoint circuits are run at the same time \(t_k\), then the outcome 
\[
    X_k\sim {\rm Binomial}(N_{\rm shot},p_k),
    \qquad
    \widehat p_k=\frac{X_k}{N_{\rm shot}} .
\]

The stopping-test stage examines the checkpoint times sequentially. At time \(t_k\), it accepts
that checkpoint if the estimated residual probability is below the constant threshold \(p_0\). A
simple rule is
\[
    \widehat p_k\le p_0.
\]
More conservatively, one may use a confidence buffer $\delta$ and accept only when
\begin{equation}
\label{eq:buffered-threshold}
    \widehat p_k+\delta \le p_0,
    \qquad
    \delta=\sqrt{\frac{\log(2K/\nu)}{2N_{\rm shot}}},
\end{equation}
where \(K\) is the number of checkpoint times tested and \(\nu\) is an acceptable failure probability.
The first index satisfying this buffered test is denoted
\[
    \widehat k_{\rm ent},
    \qquad
    \widehat t_{\rm ent}:=t_{\widehat k_{\rm ent}} .
\]
The final production time is then
\[
    T_\star=\widehat t_{\rm ent}+\Delta t,
\]
where \(\Delta t\) is the deterministic coasting time from \eqref{eq:coasting-time}.

By Hoeffding's inequality and a union bound over the \(K\) checkpoint times, the buffered rule
controls the probability of falsely accepting a checkpoint whose true residual probability is above
\(p_0\) by at most \(\nu\). For a fixed confidence buffer, this gives
\[
    N_{\rm shot}
    =
    \mathcal O(\log(K/\nu)).
\]
Thus the stopping-test sampling overhead is logarithmic in the number of checkpoints and the
failure-probability parameter, and it is not polynomial in \(1/\eps\).

For a geometric checkpoint schedule \(t_k=2^k t_0\), the number of tested checkpoints is logarithmic
in the accepted entry time \(t_{\rm ent}\). Since the final coasting time satisfies
\(\Delta t=\mathcal O(\log(1/\eps))\), the total evolution time retains the same logarithmic
precision dependence as the a priori stability bound, while the measured entry time remains
instance dependent.

\subsection{Algorithmic summary}
\label{subsec:algorithmic-summary}

The residual-based solver separates the stopping-test stage from the final production run. The
stopping-test stage scans a sequence of checkpoint times \(t_0,t_1,\ldots\). At each candidate time
\(t_k\), it performs \(N_{\rm shot}\) independent checkpoint runs. A single checkpoint run means:
prepare the initial joint residual--solution state, evolve the dilated ODE once to time \(t_k\),
measure the residual flag, record one binary outcome, and discard the state. Thus the residual
measurement is not used to continue the same quantum trajectory; it is only a Bernoulli sample for
estimating \(p_{\res}(t_k)\).

Once the first accepted checkpoint time \(\widehat t_{\rm ent}\) is selected, the final production
run evolves a fresh initial state directly to
\[
    T_\star=\widehat t_{\rm ent}+\Delta t,
\]
where \(\Delta t\) is the deterministic coasting time in \eqref{eq:coasting-time}. The production
run is not interrupted by residual measurements.

\begin{algorithm}[htp]
\caption{Residual-based quantum elliptic solver with dynamic stopping}
\label{alg:residual-based}
\begin{algorithmic}[1]
\Require Sparse-access or block-encoding access to \(G_h\); state preparation for
\(\ket{\bm b_h}\); target precision \(\eps\); constant checkpoint threshold \(p_0\); checkpoint
times \(t_0<t_1<t_2<\cdots\); number of checkpoint samples \(N_{\rm shot}\).
\Ensure A normalized solution state \(\ket{\widetilde{\bm x}_h}\) satisfying the stopping rule in
\cref{thm:constant-threshold-stopping}.
\State Construct the joint accumulator ODE
\[
    \dot{\bm z}=L_h\bm z,
    \qquad
    \bm z=(\bm x,\bm w)^T,
\]
and its dilation to a Schr\"odinger equation.
\For{\(k=0,1,2,\ldots\)}
  \State Run \(N_{\rm shot}\) independent checkpoint circuits:
  prepare the initial joint state, evolve once to time \(t_k\), and measure the residual flag.
  \State Let \(\widehat p_{\res}(t_k)\) be the empirical frequency of the residual outcome.
  \If{\(\widehat p_{\res}(t_k)+\delta \le p_0\), with \(\delta \) as in
  \eqref{eq:buffered-threshold}}
    \State Set \(t_{\rm ent}\gets t_k\) and exit the loop.
  \EndIf
\EndFor
\State Set \(T_\star\gets t_{\rm ent}+\Delta t\), with \(\Delta t\) chosen by
\eqref{eq:coasting-time}.
\State Prepare a fresh initial state and evolve the dilated joint system once to time \(T_\star\),
without residual measurements during the evolution.
\State Output the solution block, namely the \(\bm x\)-subregister of \(\bm z(T_\star)\), with the
appropriate normalization and postselection factors.
\end{algorithmic}
\end{algorithm}

We summarize the algorithm in \cref{alg:residual-based}.
 The checkpoint circuits are used only to
detect when the current instance has entered a constant-residual regime. The final accuracy is then
obtained by deterministic coasting under the proven exponential stability of the residual block.
The final solution state is produced in a separate run.

\begin{figure}[thp]
\centering
\begin{quantikz}[row sep={0.75cm,between origins}, column sep=0.35cm]
\lstick{$a_{\res}:\ket{0}$}
  & \qw
  & \gate[wires=2]{U_{\mathrm{dil}}(t_k)}
  & \meter{}
  & \gate{\mathrm{reset}}
  & \qw
  & \rstick{$m_k\in\{0,1\}$}
\\
\lstick{$d:\ket{\psi_0}$}
  & \gate{U_{\mathrm{prep}}}
  & \qw
  & \qw
  & \qw
  & \qw
  & \qw
\\
\lstick{$c$}
  & \cw
  & \cw
  & \gate[1][3.2cm]{\begin{array}{c}
      \text{accumulate samples}\\[-1mm]
      \scriptsize \widehat p_{\res}(t_k)=X_k/N_{\rm shot}
    \end{array}}
  & \gate[1][3.3cm]{\begin{array}{c}
      \text{entry test}\\[-1mm]
      \scriptsize \widehat p_{\res}(t_k)+\delta\le p_0
    \end{array}}
  & \cw
  & \cw
\arrow[from=1-4,to=3-4]
\arrow[from=3-5,to=1-5]
\end{quantikz}
\caption{Checkpoint circuit for residual-based stopping. Each checkpoint run prepares the initial
state, evolves the dilated joint accumulator system once to time \(t_k\), and measures the residual
flag \(a_{\res}\). The outcome \(m_k\in\{0,1\}\) is one Bernoulli sample. Repeating the checkpoint
circuit \(N_{\rm shot}\) times gives the estimate
\(\widehat p_{\res}(t_k)=X_k/N_{\rm shot}\), which is compared with the constant threshold
\(p_0\). If the entry test accepts, a separate production run evolves directly to
\(T_\star=t_k+\Delta t\), without residual measurements, and outputs the solution block.}
\label{fig:dynamic-circuit-certification}
\end{figure}

\Cref{fig:dynamic-circuit-certification} shows a possible workflow on hardware that supports
mid-circuit measurement. The reset prepares the residual ancilla
for the next checkpoint shot; it is not a refresh of an ongoing time evolution. Since each
checkpoint run evolves directly from the initial state to \(t_k\), the dilation only needs one time
segment per shot.
\section{Numerical experiments}
\label{sec:numerics}

We illustrate the residual-based stopping mechanism on a standard finite element discretization of
the Poisson equation on the unit square ($a(x) = I$),
\begin{equation}
\label{eq:numerical-poisson}
    -\Delta u=f
    \quad \text{in } \Omega=(0,1)^2,
    \qquad
    u=0
    \quad \text{on } \partial\Omega .
\end{equation}
The domain is partitioned into \(n\times n\) square cells, and each square is split along the same
diagonal to obtain a conforming triangular mesh. We use continuous piecewise-linear, or \(P_1\),
finite elements, with homogeneous Dirichlet boundary conditions imposed by eliminating the boundary
degrees of freedom. In the experiments below, \(n=16\), so \(h=1/16\). The mesh contains
\(16^2=256\) square cells and \(512\) triangular elements. After imposing the boundary condition,
there are \((16-1)^2=225\) interior degrees of freedom. The discrete gradient has two components on
each triangle, giving \(2\cdot512=1024\) gradient degrees of freedom.

The test problem is chosen by the method of manufactured solutions. We take
\begin{equation}
\label{eq:numerical-exact-solution}
u_{\rm ex}(x,y)
=
\sin(\pi x)\sin(\pi y)
+\frac{1}{2}\sin(3\pi x)\sin(2\pi y)
+\frac{1}{4}\sin(5\pi x)\sin(4\pi y),
\end{equation}
and define \(f=-\Delta u_{\rm ex}\). Thus
\begin{equation}
\label{eq:numerical-rhs}
\begin{aligned}
f(x,y)
&=
2\pi^2\sin(\pi x)\sin(\pi y)
+\frac{13\pi^2}{2}\sin(3\pi x)\sin(2\pi y) 
+\frac{41\pi^2}{4}\sin(5\pi x)\sin(4\pi y).
\end{aligned}
\end{equation}
This gives a smooth right-hand side with several spectral components, while still allowing
comparison with a known exact solution. The mesh and the corresponding finite element solution are
shown in \cref{fig:fem-mesh-solution}.

\begin{figure}[bhpt]
\centering
\includegraphics[width=0.9\linewidth]{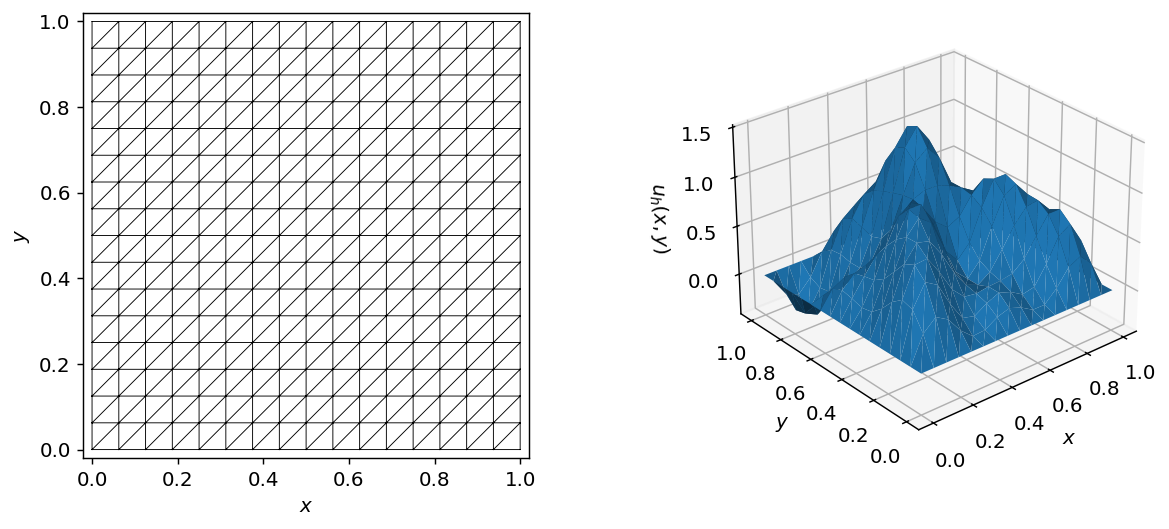}
\caption{Finite element setup for the model Poisson problem. Left: the uniform triangular mesh
obtained by splitting each square cell into two triangles. Right: the \(P_1\) finite element
solution \(u_h\) for the manufactured solution in \eqref{eq:numerical-exact-solution}. The
factorization \(A_h=G_h^\dag G_h\) is assembled from the elementwise gradients of the \(P_1\) basis
functions on this mesh.}
\label{fig:fem-mesh-solution}
\end{figure}

We next evolve the residual-augmented first-order system in \cref{eq:accumulator-ode}.
The flux variable is recovered as
\[
    \bm q(t)=G_h\bm x(t)-\bm s(t).
\]
Thus
\[
    \bm r(t)=\bm b_h-G_h^T\bm q(t),
    \qquad
    \bm s(t)=G_h\bm x(t)-\bm q(t),
\]
so \((\bm r,\bm s)\) is the mixed first-order residual carried by the ODE. The algebraic residual is
\[
    \bm r_A(t):=\bm b_h-A_h\bm x(t)
    =
    \bm r(t)-G_h^\dag\bm s(t).
\]
For the residual register, we use
\(
    \bm w=(\bm r,\bm s),
\)
and measure the probability
\begin{equation}
\label{eq:numerical-pres}
    p_{\res}(t)
    =
    \frac{\norm{\bm w(t)}^2}
    {\norm{\bm x(t)}^2+\norm{\bm w(t)}^2}.
\end{equation}

\begin{figure}[t]
\centering
\begin{minipage}{0.49\linewidth}
\centering
\includegraphics[width=\linewidth]{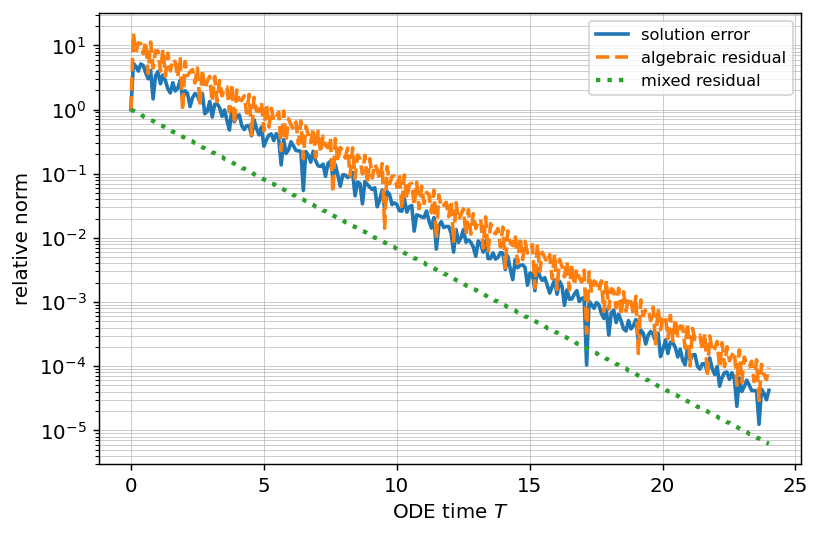}
\end{minipage}
\hfill
\begin{minipage}{0.49\linewidth}
\centering
\includegraphics[width=\linewidth]{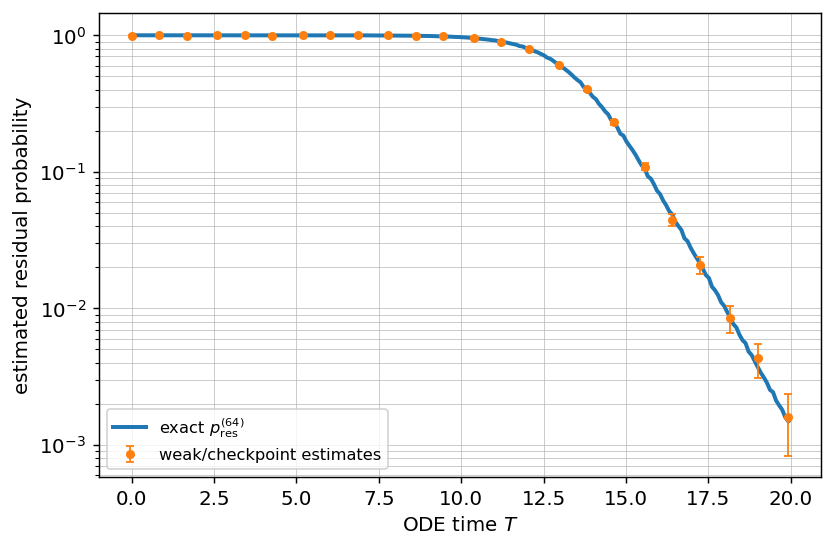}
\end{minipage}
\caption{Residual dynamics and checkpoint readout. Left: relative solution error, relative
algebraic residual \(\norm{\bm r_A(t)}/\norm{\bm b_h}\), mixed residual
\(\norm{\bm w(t)}/\norm{\bm w(0)}\), and accumulator residual probability
\(p_{\res}(t)\), computed from \((\bm x(t),\bm w(t))\). Right:
checkpoint estimates of \(p_{\res}(t)\) from Bernoulli samples of the residual flag.}
\label{fig:numerical-residual-measurement}
\end{figure}

\Cref{fig:numerical-residual-measurement} shows the main numerical mechanism. The algebraic
residual and the mixed residual decay together with the solution error, up to the oscillations
expected from the first-order damped dynamics. The residual-register probability
\(p_{\res}(t)\) follows the same trend. Thus measuring the residual register gives an
observable proxy for convergence, without reconstructing \(\bm x(t)\) or evaluating
\(\bm b_h-A_h\bm x(t)\) classically. The right panel shows that this probability can be estimated
directly from repeated checkpoint measurements.

\section{Summary and Discussions }
\label{sec:discussion}

This paper develops a residual-based dynamic stopping mechanism for quantum linear-system
algorithms, with elliptic PDEs as an important motivating application. The central idea is to make the residual a
co-evolving part of the quantum dynamics rather than a quantity reconstructed after the solution
state has been prepared. For elliptic discretizations with \(A_h=G_h^\dagger G_h\), the first-order
factor \(G_h\) gives the mesh-dependent scale \(\norm{G_h}=\bigO(h^{-1})\), while the residual
variables are driven by the same stable first-order dynamics. After dilation to a Schr\"odinger
evolution, the residual block can be measured in checkpoint runs, producing an instance-dependent
stopping signal. Thus the contribution is not a new worst-case QLSA condition-number improvement,
but a mechanism for making QLSA-based PDE solvers behave more like adaptive algorithms: run the
dynamics, monitor a residual, and stop when the current instance has converged. Although the construction is formulated for elliptic PDEs, the residual-based stopping idea can
also be extended to general stable linear systems \(A\bm x=\bm b\), by using an ODE relaxation in
which the algebraic residual evolves under the same stable dynamics. The elliptic setting is special
because the factorization \(A_h=G_h^\dagger G_h\) exposes the first-order scale
\(\norm{G_h}=\bigO(h^{-1})\), which a generic stable system need not provide.

\smallskip 

Several implementation questions are interesting. Since the residual register is a stopping mechanism rather
than a replacement for existing QLSA primitives, a complete resource estimate should separate the
cost of the underlying nonunitary ODE solver, the output-normalization factor, the checkpoint
sampling overhead, and the final solution readout. The accumulator formulation is useful in this
respect: after relaxation the solution block has order-one weight, so extracting \(\bm x(T)\) does
not introduce an additional mesh-dependent block-selection penalty. At the same time, estimating
absolute quantities may still require estimating \(\norm{\bm x_\ast}\), which can be obtained from
the output-normalization or postselection amplitude of the ODE primitive. It may also be useful to
combine the residual flag with the dilation ancillas for ODE algorithms \cite{JinLiuYu2024,AnLiuLin2023,li2025linear}, so that checkpoint shots estimate both the
conditional residual probability and the simulation success probability. Finally, the constants in
the stability estimate enter only when converting a measured residual signal into a rigorous error
bound; conservative constants give a valid but possibly cautious coasting time, while the measured
entry time into the small-residual regime remains instance dependent.

\smallskip 

 For many PDE
applications, the desired output is not the whole solution state but a quantity of interest, such
as an energy, boundary flux, averaged field value, or response functional. This suggests developing
observable-specific stopping rules, possibly using dual-weighted or adjoint residuals. Another
natural direction is nonsymmetric steady-state problems such as advection--diffusion, where the
diffusion part still has a coercive \(G_h^\dagger G_h\) structure while the transport term acts as a
lower-order or skew perturbation. Understanding when such systems admit stable first-order
dilations with measurable residual registers would make residual-based dynamic stopping applicable
to a broader class of quantum PDE algorithms.

\section*{Acknowledgments}
This work was supported by NSF Grant DMS-2411120.

\bibliographystyle{plain}
\bibliography{qlsa}

@article{HarrowHassidimLloyd2009,
  author        = {Harrow, Aram W. and Hassidim, Avinatan and Lloyd, Seth},
  title         = {{Quantum Algorithm for Linear Systems of Equations}},
  journal       = {Physical Review Letters},
  volume        = {103},
  pages         = {150502},
  year          = {2009},
  doi           = {10.1103/PhysRevLett.103.150502},
  eprint        = {0811.3171},
  archivePrefix = {arXiv},
  primaryClass  = {quant-ph}
}

@misc{RaisuddinDe2024QuantumMultigrid,
  author        = {Raisuddin, Osama Muhammad and De, Suvranu},
  title         = {{Quantum Multigrid Algorithm for Finite Element Problems}},
  year          = {2024},
  eprint        = {2404.07466},
  archivePrefix = {arXiv},
  primaryClass  = {quant-ph},
  doi           = {10.48550/arXiv.2404.07466}
}

@article{JenningsLostaglioPallisterSornborgerSubasi2025,
  author        = {Jennings, David and Lostaglio, Matteo and Pallister, Sam and Sornborger, Andrew T. and Suba{\c{s}}{\i}, Yi{\u{g}}it},
  title         = {{Randomized Adiabatic Quantum Linear Solver Algorithm with Optimal Complexity Scaling and Detailed Running Costs}},
  journal       = {{PRX Quantum}},
  volume        = {6},
  number        = {4},
  pages         = {040373},
  year          = {2025},
  doi           = {10.1103/1xkb-22cc},
  eprint        = {2305.11352},
  archivePrefix = {arXiv},
  primaryClass  = {quant-ph}
}

@article{JenningsLostaglioLowriePallisterSornborger2024,
  author        = {Jennings, David and Lostaglio, Matteo and Lowrie, Robert B. and Pallister, Sam and Sornborger, Andrew T.},
  title         = {{The Cost of Solving Linear Differential Equations on a Quantum Computer: Fast-Forwarding to Explicit Resource Counts}},
  journal       = {{Quantum}},
  volume        = {8},
  pages         = {1553},
  year          = {2024},
  doi           = {10.22331/q-2024-12-10-1553},
  eprint        = {2309.07881},
  archivePrefix = {arXiv},
  primaryClass  = {quant-ph}
}

@inproceedings{Ambainis2012,
  author        = {Ambainis, Andris},
  title         = {{Variable Time Amplitude Amplification and Quantum Algorithms for Linear Algebra Problems}},
  booktitle     = {29th International Symposium on Theoretical Aspects of Computer Science (STACS 2012)},
  series        = {Leibniz International Proceedings in Informatics},
  volume        = {14},
  pages         = {636--647},
  year          = {2012},
  doi           = {10.4230/LIPIcs.STACS.2012.636},
  eprint        = {1010.4458},
  archivePrefix = {arXiv},
  primaryClass  = {quant-ph}
}

@article{ChildsKothariSomma2017,
  author        = {Childs, Andrew M. and Kothari, Robin and Somma, Rolando D.},
  title         = {{Quantum Algorithm for Systems of Linear Equations with Exponentially Improved Dependence on Precision}},
  journal       = {SIAM Journal on Computing},
  volume        = {46},
  number        = {6},
  pages         = {1920--1950},
  year          = {2017},
  doi           = {10.1137/16M1087072},
  eprint        = {1511.02306},
  archivePrefix = {arXiv},
  primaryClass  = {quant-ph}
}

@inproceedings{GilyenSuLowWiebe2019,
  author        = {Gily{\'e}n, Andr{\'a}s and Su, Yuan and Low, Guang Hao and Wiebe, Nathan},
  title         = {{Quantum Singular Value Transformation and Beyond: Exponential Improvements for Quantum Matrix Arithmetics}},
  booktitle     = {Proceedings of the 51st Annual ACM SIGACT Symposium on Theory of Computing},
  pages         = {193--204},
  year          = {2019},
  doi           = {10.1145/3313276.3316366},
  eprint        = {1806.01838},
  archivePrefix = {arXiv},
  primaryClass  = {quant-ph}
}

@article{SubasiSommaOrsucci2019,
  author        = {Suba{\c{s}}{\i}, Yi{\u{g}}it and Somma, Rolando D. and Orsucci, Davide},
  title         = {{Quantum Algorithms for Systems of Linear Equations Inspired by Adiabatic Quantum Computing}},
  journal       = {Physical Review Letters},
  volume        = {122},
  pages         = {060504},
  year          = {2019},
  doi           = {10.1103/PhysRevLett.122.060504},
  eprint        = {1805.10549},
  archivePrefix = {arXiv},
  primaryClass  = {quant-ph}
}

@article{AnLin2022,
  author        = {An, Dong and Lin, Lin},
  title         = {{Quantum Linear System Solver Based on Time-Optimal Adiabatic Quantum Computing and Quantum Approximate Optimization Algorithm}},
  journal       = {ACM Transactions on Quantum Computing},
  volume        = {3},
  number        = {2},
  pages         = {5:1--5:28},
  year          = {2022},
  doi           = {10.1145/3498331},
  eprint        = {1909.05500},
  archivePrefix = {arXiv},
  primaryClass  = {quant-ph}
}

@book{Varga2000MatrixIterative,
  author    = {Varga, Richard S.},
  title     = {{Matrix Iterative Analysis}},
  edition   = {2},
  publisher = {Springer},
  address   = {Berlin},
  year      = {2000},
  doi       = {10.1007/978-3-642-05156-2}
}

@article{CostaAnSandersSuBabbushBerry2022,
  author        = {Costa, Pedro C. S. and An, Dong and Sanders, Yuval R. and Su, Yuan and Babbush, Ryan and Berry, Dominic W.},
  title         = {{Optimal Scaling Quantum Linear-Systems Solver via Discrete Adiabatic Theorem}},
  journal       = {PRX Quantum},
  volume        = {3},
  pages         = {040303},
  year          = {2022},
  doi           = {10.1103/PRXQuantum.3.040303},
  eprint        = {2111.08152},
  archivePrefix = {arXiv},
  primaryClass  = {quant-ph}
}

@article{CladerJacobsSprouse2013,
  author        = {Clader, B. David and Jacobs, Bryan C. and Sprouse, Chad R.},
  title         = {{Preconditioned Quantum Linear System Algorithm}},
  journal       = {Physical Review Letters},
  volume        = {110},
  pages         = {250504},
  year          = {2013},
  doi           = {10.1103/PhysRevLett.110.250504},
  eprint        = {1301.2340},
  archivePrefix = {arXiv},
  primaryClass  = {quant-ph}
}

@article{MontanaroPallister2016,
  author        = {Montanaro, Ashley and Pallister, Sam},
  title         = {{Quantum Algorithms and the Finite Element Method}},
  journal       = {Physical Review A},
  volume        = {93},
  pages         = {032324},
  year          = {2016},
  doi           = {10.1103/PhysRevA.93.032324},
  eprint        = {1512.05903},
  archivePrefix = {arXiv},
  primaryClass  = {quant-ph}
}

@article{OrsucciDunjko2021,
  author        = {Orsucci, Davide and Dunjko, Vedran},
  title         = {{On Solving Classes of Positive-Definite Quantum Linear Systems with Quadratically Improved Runtime in the Condition Number}},
  journal       = {Quantum},
  volume        = {5},
  pages         = {573},
  year          = {2021},
  doi           = {10.22331/q-2021-11-08-573},
  eprint        = {2101.11868},
  archivePrefix = {arXiv},
  primaryClass  = {quant-ph}
}

@misc{li2025linear,
  author = {Xiantao Li},
  title = {From Linear Differential Equations to Unitaries: A Moment-Matching Dilation Framework with Near-Optimal Quantum Algorithms},
  year = {2025},
  eprint = {2507.10285},
  archivePrefix = {arXiv}
}

@article{CaoPapageorgiouPetrasTraubKais2013,
  author        = {Cao, Yudong and Papageorgiou, Anargyros and Petras, Iasonas and Traub, Joseph and Kais, Sabre},
  title         = {{Quantum Algorithm and Circuit Design Solving the Poisson Equation}},
  journal       = {New Journal of Physics},
  volume        = {15},
  pages         = {013021},
  year          = {2013},
  doi           = {10.1088/1367-2630/15/1/013021},
  eprint        = {1207.2485},
  archivePrefix = {arXiv},
  primaryClass  = {quant-ph}
}

@article{BramblePasciakXu1990,
  author        = {Bramble, James H. and Pasciak, Joseph E. and Xu, Jinchao},
  title         = {{Parallel Multilevel Preconditioners}},
  journal       = {Mathematics of Computation},
  volume        = {55},
  number        = {191},
  pages         = {1--22},
  year          = {1990},
  doi           = {10.1090/S0025-5718-1990-1023042-6}
}

@misc{DeimlPeterseim2024,
  author        = {Deiml, Matthias and Peterseim, Daniel},
  title         = {{Quantum Realization of the Finite Element Method}},
  year          = {2024},
  eprint        = {2403.19512},
  archivePrefix = {arXiv},
  primaryClass  = {quant-ph}
}

@article{KharaziFitzpatrickKirbyMullin2025,
  author        = {Kharazi, Tyler and Fitzpatrick, Aaron and Kirby, William M. and Mullin, William J.},
  title         = {{Explicit Block Encodings of Boundary Value Problems for Many-Body Elliptic Operators}},
  journal       = {Quantum},
  volume        = {9},
  pages         = {1764},
  year          = {2025},
  doi           = {10.22331/q-2025-06-04-1764},
  eprint        = {2407.18347},
  archivePrefix = {arXiv},
  primaryClass  = {quant-ph}
}

@misc{JinLiuMaYu2025,
  author        = {Jin, Shi and Liu, Nana and Ma, Yue and Yu, Yue},
  title         = {{Quantum Preconditioning Method for Linear Systems Problems}},
  year          = {2025},
  eprint        = {2505.06866},
  archivePrefix = {arXiv},
  primaryClass  = {quant-ph}
}

@article{JinLiu2024Discrete,
  author        = {Jin, Shi and Liu, Nana},
  title         = {{Quantum Simulation of Discrete Linear Dynamical Systems and Simple Iterative Methods in Linear Algebra via {Schr{\"o}dingerisation}}},
  journal       = {Proceedings of the Royal Society A: Mathematical, Physical and Engineering Sciences},
  volume        = {480},
  number        = {2291},
  pages         = {20230370},
  year          = {2024},
  doi           = {10.1098/rspa.2023.0370},
  eprint        = {2304.02865},
  archivePrefix = {arXiv},
  primaryClass  = {quant-ph}
}

@article{JinLiuYu2024,
  author        = {Jin, Shi and Liu, Nana and Yu, Yue},
  title         = {{Quantum Simulation of Partial Differential Equations via {Schr{\"o}dingerisation}}},
  journal       = {Physical Review Letters},
  volume        = {133},
  pages         = {230602},
  year          = {2024},
  doi           = {10.1103/PhysRevLett.133.230602},
  eprint        = {2212.13969},
  archivePrefix = {arXiv},
  primaryClass  = {quant-ph}
}

@book{BrennerScott2008,
  author        = {Brenner, Susanne C. and Scott, L. Ridgway},
  title         = {{The Mathematical Theory of Finite Element Methods}},
  series        = {Texts in Applied Mathematics},
  volume        = {15},
  edition       = {3},
  publisher     = {Springer},
  address       = {New York},
  year          = {2008},
  doi           = {10.1007/978-0-387-75934-0}
}

@book{Ciarlet1978,
  author        = {Ciarlet, Philippe G.},
  title         = {{The Finite Element Method for Elliptic Problems}},
  series        = {Studies in Mathematics and its Applications},
  volume        = {4},
  publisher     = {North-Holland},
  address       = {Amsterdam},
  year          = {1978}
}

@article{BerryChildsOstranderWang2017,
  author        = {Berry, Dominic W. and Childs, Andrew M. and Ostrander, Aaron and Wang, Guoming},
  title         = {{Quantum Algorithm for Linear Differential Equations with Exponentially Improved Dependence on Precision}},
  journal       = {Communications in Mathematical Physics},
  volume        = {356},
  number        = {3},
  pages         = {1057--1081},
  year          = {2017},
  doi           = {10.1007/s00220-017-3002-y},
  eprint        = {1701.03684},
  archivePrefix = {arXiv},
  primaryClass  = {quant-ph}
}

@article{Krovi2023,
  author        = {Krovi, Hari},
  title         = {{Improved Quantum Algorithms for Linear and Nonlinear Differential Equations}},
  journal       = {Quantum},
  volume        = {7},
  pages         = {913},
  year          = {2023},
  doi           = {10.22331/q-2023-02-02-913},
  eprint        = {2202.01054},
  archivePrefix = {arXiv},
  primaryClass  = {quant-ph}
}

@article{AnLiuLin2023,
  author        = {An, Dong and Liu, Jin-Peng and Lin, Lin},
  title         = {{Linear Combination of Hamiltonian Simulation for Nonunitary Dynamics with Optimal State Preparation Cost}},
  journal       = {Physical Review Letters},
  volume        = {131},
  pages         = {150603},
  year          = {2023},
  doi           = {10.1103/PhysRevLett.131.150603},
  eprint        = {2303.01029},
  archivePrefix = {arXiv},
  primaryClass  = {quant-ph}
}

@misc{AnChildsLin2023,
  author        = {An, Dong and Childs, Andrew M. and Lin, Lin},
  title         = {{Quantum Algorithm for Linear Non-Unitary Dynamics with Near-Optimal Dependence on All Parameters}},
  year          = {2023},
  eprint        = {2312.03916},
  archivePrefix = {arXiv},
  primaryClass  = {quant-ph}
}

@misc{AnChildsLinYing2024Laplace,
  author        = {An, Dong and Childs, Andrew M. and Lin, Lin and Ying, Lexing},
  title         = {{Laplace Transform Based Quantum Eigenvalue Transformation via Linear Combination of Hamiltonian Simulation}},
  year          = {2024},
  eprint        = {2411.04010},
  archivePrefix = {arXiv},
  primaryClass  = {quant-ph}
}

@article{CampsLinVanBeeumenYang2024,
  author  = {Camps, Daan and Lin, Lin and Van Beeumen, Roel and Yang, Chao},
  title   = {{Explicit Quantum Circuits for Block Encodings of Certain Sparse Matrices}},
  journal = {SIAM Journal on Matrix Analysis and Applications},
  volume  = {45},
  number  = {1},
  pages   = {801--827},
  year    = {2024},
  doi     = {10.1137/22M1484298}
}

@article{SunderhaufCampbellCamps2024,
  author        = {S{\"u}nderhauf, Christoph and Campbell, Earl and Camps, Joan},
  title         = {{Block-Encoding Structured Matrices for Data Input in Quantum Computing}},
  journal       = {Quantum},
  volume        = {8},
  pages         = {1226},
  year          = {2024},
  doi           = {10.22331/q-2024-01-11-1226},
  eprint        = {2302.10949},
  archivePrefix = {arXiv},
  primaryClass  = {quant-ph}
}

@article{ChildsLiuOstrander2021,
  author        = {Childs, Andrew M. and Liu, Jin-Peng and Ostrander, Aaron},
  title         = {{High-Precision Quantum Algorithms for Partial Differential Equations}},
  journal       = {Quantum},
  volume        = {5},
  pages         = {574},
  year          = {2021},
  doi           = {10.22331/q-2021-11-10-574},
  eprint        = {2002.07868},
  archivePrefix = {arXiv},
  primaryClass  = {quant-ph}
}

@article{TongAnWiebeLin2021,
  author        = {Tong, Yu and An, Dong and Wiebe, Nathan and Lin, Lin},
  title         = {{Fast Inversion, Preconditioned Quantum Linear System Solvers, Fast Green's-Function Computation, and Fast Evaluation of Matrix Functions}},
  journal       = {Physical Review A},
  volume        = {104},
  pages         = {032422},
  year          = {2021},
  doi           = {10.1103/PhysRevA.104.032422},
  eprint        = {2008.13295},
  archivePrefix = {arXiv},
  primaryClass  = {quant-ph}
}

@article{BrezziLipnikovShashkov2005,
  author  = {Brezzi, Franco and Lipnikov, Konstantin and Shashkov, Mikhail},
  title   = {{Convergence of the Mimetic Finite Difference Method for Diffusion Problems on Polyhedral Meshes}},
  journal = {SIAM Journal on Numerical Analysis},
  volume  = {43},
  number  = {5},
  pages   = {1872--1896},
  year    = {2005},
  doi     = {10.1137/040613950}
}

@article{HymanShashkovSteinberg1997,
  author  = {Hyman, James M. and Shashkov, Mikhail and Steinberg, Stanly},
  title   = {{The Numerical Solution of Diffusion Problems in Strongly Heterogeneous Non-Isotropic Materials}},
  journal = {Journal of Computational Physics},
  volume  = {132},
  number  = {1},
  pages   = {130--148},
  year    = {1997},
  doi     = {10.1006/jcph.1996.5633}
}

@misc{LowSomma2025Nonunitary,
  author        = {Low, Guang Hao and Somma, Rolando D.},
  title         = {{Optimal Quantum Simulation of Linear Non-Unitary Dynamics}},
  year          = {2025},
  eprint        = {2508.19238},
  archivePrefix = {arXiv},
  primaryClass  = {quant-ph}
}

@book{Saad2003Iterative,
  author        = {Saad, Yousef},
  title         = {{Iterative Methods for Sparse Linear Systems}},
  edition       = {2},
  publisher     = {Society for Industrial and Applied Mathematics},
  address       = {Philadelphia, PA},
  year          = {2003},
  doi           = {10.1137/1.9780898718003},
  isbn          = {978-0-898715-34-7}
}

@book{GolubVanLoan2013Matrix,
  author        = {Golub, Gene H. and Van Loan, Charles F.},
  title         = {{Matrix Computations}},
  edition       = {4},
  publisher     = {Johns Hopkins University Press},
  address       = {Baltimore, MD},
  year          = {2013},
  isbn          = {978-1-4214-0794-4}
}

@misc{Li2025BeyondCondition,
  author        = {Li, Jianqiang},
  title         = {{A New Quantum Linear System Algorithm Beyond the Condition Number and Its Application to Solving Multivariate Polynomial Systems}},
  year          = {2025},
  eprint        = {2510.05588},
  archivePrefix = {arXiv},
  primaryClass  = {quant-ph}
}

\appendix
\section{Proof of the ODE stability }\label{appA}
\begin{proof}[Proof of \cref{thm:first-order-relaxation}]

The residual block satisfies
\[
    \dot{\bm w}=M_h\bm w,
    \qquad
    \bm w=
    \begin{bmatrix}
    \bm r\\
    \bm s
    \end{bmatrix},
    \qquad
    M_h=
    \begin{bmatrix}
    0&-G_h^\dagger\\
    G_h&-I
    \end{bmatrix}.
\]
We prove the stability estimate by a Lyapunov functional argument. Let
\[
    A_h=G_h^\dagger G_h,
    \qquad
    K_h:=A_h^{-1}G_h^\dagger,
    \qquad
    P_h:=G_hA_h^{-1}G_h^\dagger .
\]
The discrete Poincar\'e inequality gives
\[
    \norm{G_h\bm v}\ge \gamma_0\norm{\bm v},
\]
with \(\gamma_0>0\) independent of \(h\). Hence \(A_h\) is positive definite. Moreover,
\(P_h\) is the orthogonal projector onto \(\operatorname{Range}(G_h)\), and
\[
    K_hG_h=I .
\]
Also,
\[
    \norm{K_h\bm s}
    \le
    \gamma_0^{-1}\norm{P_h\bm s}
    \le
    \gamma_0^{-1}\norm{\bm s}.
\]

Choose a constant \(\eta>0\), depending only on \(\gamma_0\), such that
\[
    0<\eta<\gamma_0,
    \qquad
    2(1-\eta)-\frac{\eta}{\gamma_0^2}>0 .
\]
Define
\[
    \mathcal E(\bm r,\bm s)
    =
    \norm{\bm r}^2+\norm{\bm s}^2
    -2\eta\,\operatorname{Re}\langle \bm r,K_h\bm s\rangle .
\]
Since
\[
    \left|\langle \bm r,K_h\bm s\rangle\right|
    \le
    \gamma_0^{-1}\norm{\bm r}\norm{\bm s},
\]
the functional \(\mathcal E\) is uniformly equivalent to
\(\norm{\bm r}^2+\norm{\bm s}^2\):
\[
    \left(1-\frac{\eta}{\gamma_0}\right)
    \bigl(\norm{\bm r}^2+\norm{\bm s}^2\bigr)
    \le
    \mathcal E(\bm r,\bm s)
    \le
    \left(1+\frac{\eta}{\gamma_0}\right)
    \bigl(\norm{\bm r}^2+\norm{\bm s}^2\bigr).
\]

We now use $\mathcal E$ as the Lyapunov functional and  differentiate  it along the dynamics
\[
    \dot{\bm r}=-G_h^\dagger\bm s,
    \qquad
    \dot{\bm s}=G_h\bm r-\bm s .
\]
First,
\[
    \frac{d}{dt}
    \left(
    \norm{\bm r}^2+\norm{\bm s}^2
    \right)
    =
    -2\norm{\bm s}^2 .
\]
Let
\[
    C(t):=\operatorname{Re}\langle \bm r(t),K_h\bm s(t)\rangle .
\]
Using \(K_hG_h=I\), we obtain
\[
\begin{aligned}
    \dot C(t)
    &=
    \operatorname{Re}\langle \dot{\bm r},K_h\bm s\rangle
    +
    \operatorname{Re}\langle \bm r,K_h\dot{\bm s}\rangle  \\
    &=
    -\operatorname{Re}\langle G_h^\dagger\bm s,A_h^{-1}G_h^\dagger\bm s\rangle
    +
    \operatorname{Re}\langle \bm r,K_h(G_h\bm r-\bm s)\rangle \\
    &=
    -\norm{P_h\bm s}^2
    +
    \norm{\bm r}^2
    -
    C(t).
\end{aligned}
\]
Therefore,
\[
\begin{aligned}
    \frac{d}{dt}\mathcal E
    &=
    -2\norm{\bm s}^2
    -2\eta\dot C(t)\\
    &=
    -2\eta\norm{\bm r}^2
    -2\norm{\bm s}^2
    +2\eta\norm{P_h\bm s}^2
    +2\eta C(t).
\end{aligned}
\]
Decompose
\[
    \bm s=P_h\bm s+(I-P_h)\bm s .
\]
Using
\[
    2|C(t)|
    \le
    \norm{\bm r}^2+\gamma_0^{-2}\norm{P_h\bm s}^2,
\]
we get
\[
\begin{aligned}
    \frac{d}{dt}\mathcal E
    &\le
    -\eta\norm{\bm r}^2
    -
    \left(
    2(1-\eta)-\frac{\eta}{\gamma_0^2}
    \right)
    \norm{P_h\bm s}^2
    -
    2\norm{(I-P_h)\bm s}^2 .
\end{aligned}
\]
By the choice of \(\eta\), the right-hand side is bounded above by
\[
    -c_0\bigl(\norm{\bm r}^2+\norm{\bm s}^2\bigr)
\]
for some \(c_0>0\) independent of \(h\). Since \(\mathcal E\) is uniformly equivalent to
\(\norm{\bm r}^2+\norm{\bm s}^2\), there exist constants \(C_{\rm st}\ge1\) and
\(c_{\rm st}>0\), independent of \(h\), such that
\[
    \norm{e^{tM_h}}
    \le
    C_{\rm st}e^{-c_{\rm st}t},
    \qquad t\ge0 .
\]
This proves the stability estimate and hence
\[
    \norm{\bm w(t)}
    \le
    C_{\rm st}e^{-c_{\rm st}t}\norm{\bm w(0)} .
\]

Since \(\bm w(t)\) is integrable in time, the accumulator
\[
    \bm x_\ast
    :=
    \bm x(0)+\int_0^\infty P_r\bm w(\tau)\,d\tau
\]
exists. Let \(\bm q(t)\) be defined by \eqref{eq:q-implicit}. The propagated identities give
\[
    \bm r(t)=\bm b_h-G_h^\dagger\bm q(t),
    \qquad
    \bm s(t)=G_h\bm x(t)-\bm q(t).
\]
Taking \(t\to\infty\) and using \(\bm r(t),\bm s(t)\to0\) gives
\[
    \bm b_h-G_h^\dagger\bm q_\ast=0,
    \qquad
    G_h\bm x_\ast-\bm q_\ast=0.
\]
Therefore
\[
    G_h^\dagger G_h\bm x_\ast=\bm b_h.
\]

For the tail bound, use the semigroup property:
\[
    \bm x_\ast-\bm x(t)
    =
    \int_t^\infty P_r\bm w(\tau)\,d\tau
    =
    P_r\int_0^\infty e^{sM_h}\bm w(t)\,ds .
\]
Since \(\norm{P_r}=1\), the stability estimate gives
\[
    \norm{\bm x_\ast-\bm x(t)}
    \le
    \int_0^\infty C_{\rm st}e^{-c_{\rm st}s}\,ds\,\norm{\bm w(t)}
    =
    \frac{C_{\rm st}}{c_{\rm st}}\norm{\bm w(t)}.
\]
\end{proof}

\section{Proof of the ODE-based QLSA complexity }\label{appB}
\begin{proof}[Proof of \cref{thm:quantum-first-order-ode}]
By Corollary~\ref{cor:Gh-block-encoding}, the accumulator generator
\[
    L_h=
    \begin{bmatrix}
    0&P_r\\
    0&M_h
    \end{bmatrix},
    \qquad
    M_h=
    \begin{bmatrix}
    0&-G_h^\dagger\\
    G_h&-I
    \end{bmatrix},
\]
has a block encoding with normalization
\[
    \alpha_L=\bigO(h^{-1}),
\]
because it consists only of \(G_h\), \(G_h^\dagger\), identity blocks, the bounded projection
\(P_r\), and bounded damping/projector blocks. By \cref{thm:first-order-relaxation}, choosing
\[
    T
    =
    \widetildeO\!\left(\log\frac1\eps\right)
\]
makes the finite-time relaxation error \(\bigO(\eps)\), with logarithmic dependence on \(h\) and on
\(\norm{\bm b_h}/\norm{\bm x_\ast}\) absorbed in \(\widetildeO\). Hence
\[
    \alpha_L T
    =
    \widetildeO\!\left(
        h^{-1}\log\frac1\eps
    \right).
\]

We next account for the output normalization. The accumulator ODE is homogeneous:
\[
    \bm z(T)=e^{TL_h}\bm z(0),
    \qquad
    \bm z(0)=
    \begin{bmatrix}
    0\\
    \bm b_h\\
    0
    \end{bmatrix}.
\]
For LCHS-type algorithms, the state-preparation overhead is governed by the output-norm ratio of
the nonunitary evolution. In the homogeneous time-independent case, this is the factor
\[
    \Gamma_{\rm out}(T)
    :=
    \frac{\norm{\bm z(0)}}{\norm{\bm z(T)}} .
\]
This is the homogeneous specialization of the standard LCHS output-normalization factor; compare
the homogeneous-case corollaries of \cite{AnChildsLin2023}, where the state-preparation cost is
optimal in this ratio.

In the present accumulator system,
\[
    \norm{\bm z(0)}=\norm{\bm b_h},
    \qquad
    \norm{\bm z(T)}^2
    =
    \norm{\bm x(T)}^2+\norm{\bm w(T)}^2 .
\]
By the choice of \(T\),
\[
    \norm{\bm x(T)-\bm x_\ast}
    =
    \bigO(\eps)\norm{\bm x_\ast},
    \qquad
    \norm{\bm w(T)}
    =
    \bigO(\eps)\norm{\bm x_\ast}.
\]
Therefore
\[
    \norm{\bm z(T)}
    =
    (1+\bigO(\eps))\norm{\bm x_\ast},
\]
and hence
\[
    \Gamma_{\rm out}(T)
    =
    (1+\bigO(\eps))
    \frac{\norm{\bm b_h}}{\norm{\bm x_\ast}} .
\]
If the right-hand side is normalized before state preparation, then
\[
    \widehat{\bm x}_\ast
    =
    A_h^{-1}\ket{\bm b_h}
    =
    \frac{\bm x_\ast}{\norm{\bm b_h}},
\]
so the same factor is
\[
    \Gamma_{\rm out}(T)
    =
    (1+\bigO(\eps))
    \frac{1}{\norm{A_h^{-1}\ket{\bm b_h}}}.
\]

Combining the LCHS query dependence on \(\alpha_L T\), the input-state preparation cost \(C_b\),
and the output-normalization factor \(\Gamma_{\rm out}(T)\) gives
\[
    \widetildeO\!\left(
        \frac{\norm{\bm b_h}}{\norm{\bm x_\ast}}
        \left[
        C_b+
        h^{-1}\polylog\frac1\eps
        \right]
    \right).
\]

It remains to check the cost of extracting the solution block. Since
\[
    \bm z(T)=
    \begin{bmatrix}
    \bm x(T)\\
    \bm w(T)
    \end{bmatrix},
\]
the probability of selecting the \(\bm x\)-block is
\[
    p_x(T)
    =
    \frac{\norm{\bm x(T)}^2}
    {\norm{\bm x(T)}^2+\norm{\bm w(T)}^2}.
\]
Using the bounds above,
\[
    \norm{\bm x(T)}
    =
    (1+\bigO(\eps))\norm{\bm x_\ast},
    \qquad
    \norm{\bm w(T)}
    =
    \bigO(\eps)\norm{\bm x_\ast},
\]
so
\[
    p_x(T)=1-\bigO(\eps^2).
\]
Thus selecting the \(\bm x\)-block succeeds with order-one probability and does not change the
leading mesh dependence. Finally, since
\[
    \norm{\bm x(T)-\bm x_\ast}
    =
    \bigO(\eps)\norm{\bm x_\ast},
\]
the normalized states \(\ket{\bm x(T)}\) and \(\ket{\bm x_\ast}\) are \(\bigO(\eps)\)-close.
\end{proof}

\end{document}